\algnewcommand{\algorithmicinput}{\textbf{Input:}}
\algnewcommand{\algorithmicoutput}{\textbf{Output:}}
\algnewcommand\Input{\item[\algorithmicinput]\quad}%
\algnewcommand\Output{\item[\algorithmicoutput]}%
\newcommand{\NP}{$\mathrm{NP}$}
\newcommand{\cost}{\textsc{cost}\xspace}
\newcommand{\OPT}{\textsc{Opt}\xspace}
\newcommand{\BoC}{\textsc{BoC}\xspace}
\newcommand{\noC}{\mathrm{noC}}
\newcommand{\bigO}{\mathcal{O}}
\DeclareMathOperator*{\argmin}{arg\,min}
\DeclareMathOperator*{\dist}{dist}
\newtheorem{fact}{Fact}
\newtheorem{myremark}{Remark}
\newtheorem{theorem}{Theorem}
\newtheorem{definition}{Definition}
\newtheorem{example}{Example}
\begin{document}

\title{Truthful Mechanisms for Delivery with Agents}

\author[1]{Andreas Bärtschi}
\author[1]{Daniel Graf}
\author[1]{Paolo Penna}
\affil[1]{
	ETH Zürich, Department of Computer Science, Switzerland\\
	\texttt{$\left\{\right.$andreas.baertschi,\,daniel.graf,\,paolo.penna$\left.\right\}$@inf.ethz.ch}
}

\maketitle
\begin{abstract}
	We study the game-theoretic task of selecting mobile agents to deliver multiple items on a network. 
	An instance is given by $m$ packages (physical objects) which have to be transported between specified source--target pairs in an 
	undirected graph, 
	and $k$ mobile heterogeneous agents, each being able to transport one package at a time. 
	Following a recent model~\cite{STACS17}, 
	each agent $i$ has a different rate of energy consumption per unit distance traveled, i.e., its \emph{weight}. 
	We are interested in optimizing or approximating the \emph{total energy consumption} over all selected agents.

	Unlike previous research, we assume the weights to be private values known only to the respective agents. 
	We present three different mechanisms which select, route and pay the agents in a truthful way that guarantees voluntary participation of the agents, 
	while approximating the optimum energy consumption by a constant factor.
	To this end, we analyze a previous structural result and an approximation algorithm given in~\cite{STACS17}.
	Finally, we show that for some instances in the case of a single package, the sum of the payments can be bounded in terms of the optimum. 
\end{abstract}

\section{Introduction}
\label{sec:introduction}

We study the \emph{delivery} of physical objects (henceforth called \emph{packages}) by mobile agents. 
Regardless of whether large volumes are transported by motor lorries or cargo airplanes, or whether autonomous drones deliver your groceries, the fuel or battery consumption of the transporting agent is a major cost factor.
The main concern for algorithm-design thus focuses on an energy-efficient operation of these \emph{agents}. 
The primary energy expense is defined by the movements of the agents; in this paper we consider the energy consumption to be proportional to the distance traveled by an agent. 
We assume the agents to be heterogeneous in the sense of the agents having different rates of energy consumption.

Recent progress in minimizing the \emph{total energy consumption} in \emph{delivery problems} 
has been made assuming that the agents belong to the same entity which wants to deliver the packages~\cite{STACS17}. 
It has been shown that there is a polynomial-time constant-factor approximation algorithm with the ratio
depending on the energy consumption rates.

In the following, however, we assume the agents to be \emph{independent selfish agents} which make a bid to transport packages.
Think for instance of a cargo company that hires subcon\-tractors (like independent lorry drivers or individual couriers) to outsource the delivery of packages to these agents.
In this scenario, the rate of energy consumption is a private value known only to the respective agent.  
Our goal is hence to design a mechanism for the cargo company to select agents, plan the agents' routes and reimburse them in a way such that:
\begin{enumerate}
	\item	the energy spent overall is as close to the optimum as possible,
	\item	each agent announces its true rate of energy consumption in its bid, and 
	\item	each agent is reimbursed at least the cost of energy it needs to deliver all of its packages. 
\end{enumerate}

\begin{figure*}[t!]
	\includegraphics[width=\linewidth]{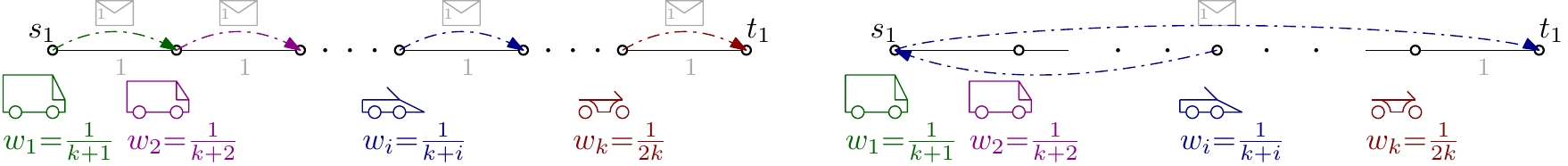}
	\caption{Delivery of a single package on a path of length $k$; consecutive agents of weight $w_i = \tfrac{1}{k+i}$.\newline
	(left) Optimal solution using all agents; cost 
	$\smash{= \sum_{i=1}^k w_i\cdot 1 = \sum_{i=1}^k \tfrac{1}{k+i} = \mathcal{H}_{2k} - \mathcal{H}_k \approx \ln 2\ (k \rightarrow \infty)}$.\newline
	(right) Any solution using a single agent $i$ has cost 
	$w_i \cdot \left( (i-1) + k \right) = (k+i-1)/(k+i) \approx 1\ (k \rightarrow \infty)$.\label{fig:delivery}}
\end{figure*}
\subparagraph{Our model.} We are given a connected, undirected graph $G=(V,E)$ on $n:=|V|$ nodes with the length of each edge, together with
$m$ packages, each of which is given by a specified source-target node pair $(s_j, t_j)$. 
Furthermore there are $k>1$ mobile agents initially located at distinct nodes of the graph.
Each agent $i$ starts at position $p_i$ and can carry at most one package at a time. 
An agent can pick up a package at some node $u$ and drop the package again at some other node $v$.
In that sense it is also possible to hand over a package $j$ between agents if agent $a$ drops it at a node where another agent $b$ later picks up the same package, 
in which case we say that the agents \emph{collaborate} on package $j$.
A scenario in which each package $j$ is delivered from its source $s_j$ to its target $t_j$ is called a \emph{feasible solution} of the delivery problem.
Such a feasible solution $x$ specifies the \emph{travel distance} $d_i(x)$ that each agent $i$  has to travel in this solution.  

Each agent consumes energy proportional to the distance it travels in the graph, and we are interested in optimizing the total energy consumption for the team of agents.
Specifically, we consider heterogeneous agents with different rates of energy consumption (weights~$w_i$), and therefore the energy spent by agent $i$ in solution $x$ is
$w_i \cdot d_i(x)$, while the total energy of a solution is (see e.g., Figures~\ref{fig:delivery},\ref{fig:collabcost}): 
$\cost(x,w) := \sum_{i=1}^k w_i \cdot d_i(x)$, where $w = (w_1, \ldots, w_k)$.
We denote by $\OPT$ a solution with minimum cost among all solutions, $\OPT \in \argmin_x \cost(x,w)$.
It is natural to consider the scenario in which every agent is \emph{selfish} and therefore cares only about its own cost (energy) and not about the optimum social cost (total energy).

\subparagraph{Mechanism design.}
\label{sec:mechanism-design}
We consider the scenario in which agents can cheat or speculate about their costs: each $w_i$ is \emph{private} piece of information known to agent $i$ who
can report a possibly different $w_i'$. For instance, an agent may find it convenient to report a very high cost, in order to induce the underlying algorithm to assign a shorter travel distance to it (which may not be globally optimal). To deal with such situations, we introduce suitable compensations for the agents to incentivize them to truthfully report their costs. This combination of an algorithm and a payment rule is called a \emph{mechanism}, 
and we are interested in the following:
\begin{itemize}
	\item	\emph{Optimality.} The mechanism runs some (nearly) optimal algorithm such that, if agents do not misreport, 
		 the computed solution  has an optimal (or nearly optimal) social cost $\cost(x,w)$  with respect to the true weights $w$.
	\item	\emph{Truthfulness.} For every agent, truth-telling is a \emph{dominant strategy}, i.e., in no circumstance it is
		beneficial for an agent to misreport its cost. 	
		Thus, independently of the other agents, its utility (payment minus incurred cost)
		is maximized when reporting the true cost. 
	\item	\emph{Frugality.} The total payment to the agents should be nearly optimal, that is, comparable to the total energy cost. 
		Since agents should be payed at least their own cost when
		truthfully reporting (\emph{voluntary participation}), the mechanism \emph{must} pay 
		at least $\cost(x,w)$.
\end{itemize}
Intuitively speaking, we are paying the agents to make sure that they reveal their true costs, and, in this way, we can find a (nearly) optimal solution (w.r.t. the sum of weighted travel distances). 
At the same time, we want to minimize our total payments, i.e., we would like not to spend much more than the actual cost (weighted travel distance) of the solution.

\subparagraph{Our results.}
After some preliminaries on mechanism design and on energy-efficient delivery in Section~\ref{sec:preliminaries}, we first investigate in Section~\ref{sec:mechanism-full} 
the constant-factor $\left(4\cdot \max \smash{\tfrac{w_i}{w_j}}\right)$-approximation algorithm presented in~\cite{STACS17}. 
We reason why this algorithm cannot be turned into a mechanism that is both truthful \emph{and} guarantees voluntary participation.
However, using the algorithm as a black box for a new algorithm $A^*$ and applying Clarke's pivot rule for the payments to the agents, 
we give a truthful mechanism based on $A^*$ which guarantees voluntary participation and incurs a total energy of at most 
$\left(4\cdot \max\smash{\tfrac{w_i}{w_j}}\right)$ times the energy cost of an optimal delivery.

In Section~\ref{sec:mechanisms-constant}, we consider instances where either the number of packages $m$ is constant or the number of agents $k$ is constant. 
For both cases we provide constant-factor approximation algorithms with approximation factors that are independent of the agents' weights.
Both of these approximation algorithms satisfy sufficient conditions to be turned into truthful mechanisms with voluntary participation. 
The running time of the former algorithm can be improved to yield a $\mathit{FPT}$-approximation mechanism, parametrized by the number of packages; the latter runs in exponential time with $k$ as the base of the exponential term.

Finally, in Section~\ref{sec:frugality}, we discuss the frugality of mechanisms for the  case of a single package. In particular, we consider two truthful mechanisms, namely, the optimal one (which possibly uses several agents), and the one which always uses a single agent only.
Although the latter results in a higher energy cost, it might need a smaller sum of payments.
However, under some assumptions on the input, the payments of both mechanisms are only a small multiplicative factor larger than the minimum necessary (the cost of the optimum).

\subparagraph{Related work.} 
\emph{Energy-efficient} delivery has not been studied until recently.  
Most previous results are based on a model 
where the agents have uniform rates of energy-consumption but limited battery~\cite{DDalgosensors13}.
This restricts the possible movements of the agents -- one gets the decision problem of whether the given packages can be delivered without exceeding the available battery levels. 
This turns out to be \NP-hard even for a single package~\cite{DDicalp14,sirocco16} and even if energy can be exchanged between the agents~\cite{EnergyExchange15}.
The model of unbounded battery but heterogeneous weights has been introduced recently~\cite{STACS17} (for the full version see~\cite{BoC-TR16}).
Besides the mentioned approximation algorithm,
it was shown that a restricted solution in which each package is delivered by a single agent approximates an optimum delivery 
(in which agents can handover a package to another agent) by a factor of at most $2$.
This result has been named the \emph{Benefit of Collaboration}. 
Furthermore, the problem is \NP-hard to approximate to within a small constant, even for a single agent.

Approximating the \emph{maximum} travel distance of $k$ agents has been studied for other tasks such as visiting a set of given arcs~\cite{Frederickson76}, or visiting all nodes of a tree~\cite{FraGKP04}.
Furthermore Demaine et al.~\cite{Demaine2009} studied fixed-parameter tractability for
minimizing both the \emph{sum of} as well as the \emph{maximum} travel distance of agents for several tasks such as pattern formation, 
parametrized by the number of agents~$k$. These models consider only unweighted agents.

The problem of performing some collaborative task using \emph{selfish agents} is well studied in algorithmic game theory and, in particular, 
in algorithmic \emph{mechanism design} \cite{Nisan2001} where the system pays the agents in order to make sure that they report their costs truthfully. 
The existence of computationally feasible truthful mechanisms is one of the central questions, as truthfulness is often obtained by running an exact algorithm \cite{NisRon07}. 
In addition, even for simple problems, like shortest path, the mechanism may have to pay a lot to the agents \cite{Elk04,ArcTar07}.  
Network flow problems have been studied for the case when selfish agents own edges of the network and the mechanism pays them in order to deliver packages~\cite{Agarwal2008,Kalai82}. 
A setting where the transported packages are selfish entities which choose from fixed-route transportation providers was recently studied in \cite{fotakis2017selfish}.
In a sense, this is the reverse setting of our problem.

\section{Preliminaries}
\label{sec:preliminaries}
\subparagraph{Mechanisms.}

A mechanism is a pair $(A,P)$ where $A$ is an algorithm and $P$ a payment scheme which, for a given vector $w'=(w_1',\ldots,w_k')$ of costs reported by the agents, 
computes a solution $A(w')$ and a payment $P_i(w')$ for each agent $i$.
\begin{definition}[Truthful mechanism]
	A mechanism $(A,P)$ is \emph{truthful} if truth-telling is a dominant strategy (utility maximizing) for all agents. That is, for any vector $w'=(w_1',\ldots,w_k')$ of costs reported by the agents, for any $i$, and for any true cost $w_i$ of agent $i$,
	$P_i(w') - w_i \cdot d_i(A(w')) \leq P_i(w_i,w'_{-i}) - w_i \cdot d_i(A(w_i,w'_{-i}))$,
	where $d_i(x)$ is the travel distance of agent $i$ in solution $x$, and where $w'_{-i}:= (w_1',\ldots,w'_{i-1}, w'_{i+1},\ldots, w_k')$ and $(w_i,w'_{-i}):= (w_1',\ldots,w'_{i-1}, w_i, w'_{i+1},\ldots, w_k')$.
\end{definition}

Truthfulness can be achieved through a construction known as VCG mechanisms \cite{Vic61,Cla71,Gro73}, which requires that the underlying algorithm satisfies certain `optimality' conditions:

\begin{definition}[VCG-based mechanism]\label{def:VCG-payments}
	A~VCG-based mechanism is a pair $(A,P)$ of the following form: 
	For any vector  $w'=(w_1',\ldots,w_k')$ of costs reported by the agents, and for each agent $i$, there is a function $Q_i()$ independent of $w_i'$ such that $i$ is payed an amount of
	\begin{equation}
	\label{eq:VCG-payment}
	P_i(w') = Q_{i}(w'_{-i}) - \left(\sum\nolimits_{j\neq i} w'_j \cdot d_j(A(w'))\right).
	\end{equation}
\end{definition}

Intuitively speaking, these mechanisms turn out to be truthful, whenever the underlying algorithm minimizes the social cost with respect to a fixed subset of solutions 
(in particular, an optimal algorithm always yields a truthful mechanism):

\begin{theorem}[Proposition~3.1 in \cite{NisRon07}]\label{th:VCG-based}
	A VCG-based mechanism $(A,P)$ is truthful if  algorithm $A$ minimizes the social cost over a fixed subset $R_A$ of solutions. That is, there exists $R_A$ such that, for every $w'$,
	$A(w') \in  \argmin_{x\in R_A} \{ \cost(x,w') \}$.
\end{theorem}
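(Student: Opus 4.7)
The approach is the standard VCG utility-rewriting argument. Fix an agent $i$, a vector $w'_{-i}$ of reports of the other agents, and the true weight $w_i$. I want to show that, whatever $i$ reports, its utility is no larger than when it reports $w_i$ truthfully. Let me write $w' = (w_i', w'_{-i})$ for an arbitrary report and compare the utility under $w'$ with the utility under $(w_i, w'_{-i})$.

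The key observation is that substituting the VCG payment rule~\eqref{eq:VCG-payment} into the definition of utility gives
\begin{equation*}
P_i(w') - w_i \cdot d_i(A(w')) \;=\; Q_i(w'_{-i}) \;-\; \Bigl(\,w_i \cdot d_i(A(w')) + \sum_{j\neq i} w'_j \cdot d_j(A(w'))\Bigr).
\end{equation*}
Since $Q_i(w'_{-i})$ does not depend on $w_i'$, maximizing the utility of $i$ over its possible reports is equivalent to choosing the report $w_i'$ that causes $A$ to output a solution minimizing the expression in parentheses. That expression is exactly $\cost\bigl(A(w'), (w_i, w'_{-i})\bigr)$, i.e., the social cost of the produced solution evaluated at the \emph{true} weights of the agents.

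Now I would invoke the hypothesis on $A$. By assumption, on input $(w_i, w'_{-i})$ the algorithm returns $A(w_i, w'_{-i}) \in \argmin_{x \in R_A}\cost(x,(w_i, w'_{-i}))$. Since any output $A(w')$ also lies in the fixed set $R_A$, we have
\begin{equation*}
\cost\bigl(A(w_i, w'_{-i}),(w_i,w'_{-i})\bigr) \;\leq\; \cost\bigl(A(w'),(w_i,w'_{-i})\bigr).
\end{equation*}
Adding $Q_i(w'_{-i})$ and negating the cost terms yields precisely the truthfulness inequality in the definition, which finishes the proof.

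The only subtlety (and what makes the fixed set $R_A$ hypothesis essential) is the comparison in the last display: if the range of $A$ depended on the declared weights, one could not compare the two costs inside the same minimization problem, and the inequality could fail. Beyond that, the argument is a routine substitution and there are no real obstacles; the proof is short and follows the textbook VCG computation.
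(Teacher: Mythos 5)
Your proof is correct. The paper itself does not prove this statement — it imports it as Proposition~3.1 of Nisan and Ronen, noting only that the version here is a specialization to the one-parameter valuations $v_i(x) = -w_i\cdot d_i(x)$ — and your argument is exactly the standard computation that establishes it: rewriting the utility as $Q_i(w'_{-i}) - \cost\bigl(A(w'),(w_i,w'_{-i})\bigr)$, observing that $Q_i(w'_{-i})$ is report-independent, and using that every output of $A$ lies in the fixed set $R_A$ over which $A$ minimizes the true social cost. Your closing remark about why the fixed range $R_A$ is essential is also the right one to make.
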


The above result is a simple rewriting of the one in \cite{NisRon07} which is originally stated for a more general setting, in which agent $i$ valuates a solution $x$ by an amount $v_i(x)$, and $v_i()$ is the private information. Our setting is the special case in which these valuations are all of the form $v_i(x)= -w_i \cdot d_i(x)$ and $w_i$ is the private information (this setting is also called one-parameter \cite{ArcTar01}).

\begin{definition}[Voluntary participation]
	A mechanism $(A,P)$ satisfies the \emph{voluntary participation} condition if truth-telling agents have always a nonnegative utility. 
	That is, for every $w'=(w_1',\ldots,w_k')$, and for every agent $i$,
	$P_i(w') - w_i' \cdot d_i(A(w')) \geq 0$.
\end{definition}

We assume there are at least \emph{two agents} (otherwise the problem is trivial and there is no point in doing mechanism design; 
also one can easily show that voluntary participation \emph{and} truthfulness cannot be achieved in this case).
Voluntary participation can be obtained by the standard Clarke pivot rule, 
setting in the payments \eqref{eq:VCG-payment} the functions $Q_i()$ as follows:
\begin{equation}\label{eq:VCG:VP}
Q_i(w'_{-i}) := \cost(A(\bot,w'_{-i}),w'_{-i})
\end{equation}
where $(\bot,w'_{-i})$ is the instance in which agent $i$ is not present.
Note that, if algorithm $A$ runs in polynomial time, then the payments can also be computed in polynomial time: 
we only need to recompute $k$ solutions using $A$ and their costs. 
The next is a well-known result:

\begin{fact}\label{fac:VP}
	The VCG-based mechanism $(A,P)$ with the payments in \eqref{eq:VCG:VP} satisfies voluntary participation if the algorithm satisfies the following condition: For any vector $w'$ and for any agent $i$,
	$\cost(A(\bot,w'_{-i}),w')\geq \cost(A(w'),w')$.
\end{fact}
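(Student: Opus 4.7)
The plan is to unfold the definition of voluntary participation and substitute the Clarke pivot payments, so the inequality to prove becomes exactly the hypothesis. The only subtlety is to reconcile two slightly different cost expressions: $\cost(\cdot,w'_{-i})$ is defined on the instance where agent $i$ is absent, whereas the hypothesis uses $\cost(\cdot, w')$, which formally includes a term for agent~$i$.

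Concretely, I would start from the voluntary participation inequality $P_i(w') - w_i' \cdot d_i(A(w')) \geq 0$ and rewrite its left-hand side using \eqref{eq:VCG-payment} together with the choice of $Q_i$ in \eqref{eq:VCG:VP}:
\[
P_i(w') - w_i' \cdot d_i(A(w')) \;=\; \cost(A(\bot,w'_{-i}),w'_{-i}) \;-\; \sum_{j\neq i} w_j' \cdot d_j(A(w')) \;-\; w_i' \cdot d_i(A(w')).
\]
The last two terms collapse to $\cost(A(w'),w')$ by the very definition of the social cost, so the right-hand side becomes $\cost(A(\bot,w'_{-i}),w'_{-i}) - \cost(A(w'),w')$.

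The next step is to replace $\cost(A(\bot,w'_{-i}),w'_{-i})$ by $\cost(A(\bot,w'_{-i}),w')$. This is where one must be careful: in the instance $(\bot,w'_{-i})$ agent~$i$ is not present, so any feasible solution assigns travel distance $0$ to agent~$i$. In particular $d_i(A(\bot,w'_{-i})) = 0$, hence the extra term $w_i' \cdot d_i(A(\bot,w'_{-i})) = 0$ can be added freely, yielding $\cost(A(\bot,w'_{-i}),w'_{-i}) = \cost(A(\bot,w'_{-i}),w')$. After this identification the expression equals $\cost(A(\bot,w'_{-i}),w') - \cost(A(w'),w')$, which is nonnegative by the hypothesis of the fact. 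This proves voluntary participation.

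The argument is essentially a bookkeeping calculation, so no real obstacle arises; the only point that requires attention is the harmless padding of the cost on the instance without agent~$i$ by a zero term, which is needed to match the form of the hypothesis. Everything else follows by unfolding definitions.
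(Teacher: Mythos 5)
Your proposal is correct and follows essentially the same route as the paper's proof: substitute the Clarke pivot payment into the utility, collapse the sum into $\cost(A(w'),w')$, and observe that $\cost(A(\bot,w'_{-i}),w'_{-i})=\cost(A(\bot,w'_{-i}),w')$ since agent $i$ travels distance zero in the solution computed without it. The only cosmetic difference is that the paper writes the utility with the true weight $w_i$ and then invokes truth-telling to set $w_i'=w_i$, whereas you work with $w_i'$ directly.
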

\begin{proof}
	Observe that the utility of agent $i$ is
	\begin{align*}
		Q_{i}(w'_{-i}) - \left(\smash{\sum\limits_{j\neq i}} w'_j \cdot d_j(A(w')) \mathclap{\phantom{2^{2^2}}} \right) - w_i \cdot d_i(A(w')) = Q_{i}(w'_{-i}) - \cost(A(w'),(w_i,w_{-i}')) & \\
	= \cost(A(\bot,w'_{-i}),w'_{-i}) - \cost(A(w'),(w_i,w_{-i}')). &
	\end{align*}
	When $i$ is truth-telling we have $w_i'=w_i$, and $w'=(w_i,w_{-i}')$. 
	Also, $\cost(A(\bot,w_{-i}'),w_{-i}')=\cost(A(\bot,w_{-i}'),w')$ because $i$ is not present in solution $A(\bot,w_{-i}')$.
	Hence the utility of $i$ is $\cost(A(\bot,w'_{-i}),w') - \cost(A(w'),w')$, which is non-negative by assumption.
\end{proof}

\begin{definition}[Approximation mechanism]
	A mechanism $(A,P)$ is a $c-$approximation mechanism, if for every input vector of bids $w'$ its algorithm $A$ computes 
	a solution $A(w')$ which is a $c-$approximation of a best solution $\OPT(w')$, i.e., $\cost(A(w'),w') \leq c\cdot \cost(\OPT(w'),w')$.
\end{definition}

\subparagraph{Collaboration of agents.}

\begin{figure*}[t!]
	\includegraphics[width=\linewidth]{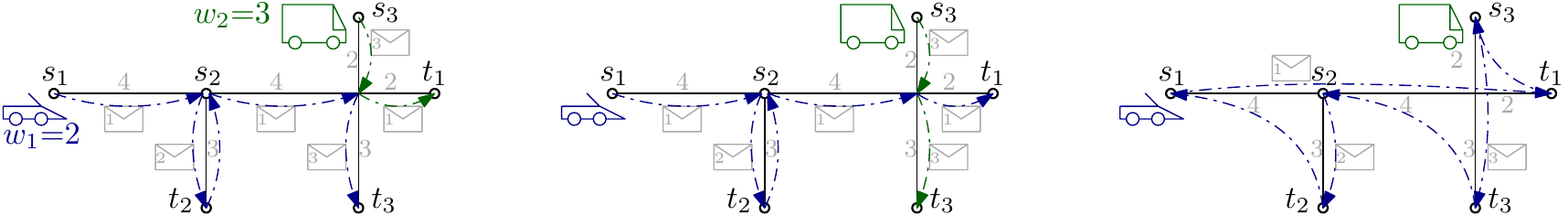}
	\caption{(left) Optimal solution of cost $2\cdot 17 + 3\cdot 4 = 46$ with collaboration on packages $1$ and $3$.\newline 
	(middle) Optimal solution among all non-collaborative solutions with energy cost $2\cdot 16+3\cdot 5 = 47$.\newline
	(right) Optimal non-collaborative solution \emph{with} direct delivery and return; total energy $2\cdot 2\cdot 18 = 72$.\label{fig:collabcost}}	
\end{figure*}

To describe a solution for an instance of the delivery problem, we can (among other characteristics) elaborate on the following properties of the solution: 
How do agents work together on each package (\emph{collaboration}), how are agents assigned to packages (\emph{coordination}) and which route does each agent take (\emph{planning}).
Most of the mechanisms in this paper are based on the characterizations of the \emph{benefit of collaboration}:
 
\newcommand{\res}{R^{\star}_{\noC}}
\begin{definition}[Benefit of collaboration]\label{def:boc}
	Define $R_{\noC}$ as the set of solutions $x$ in which there is no collaboration of the agents,
	meaning that each package is delivered by a single agent only.
	Define $\res \subset R_{\noC}$ as the subset of solutions $x\in R_{\noC}$ in which
	(i) every package is transported directly from source to target without intermediate dropoffs, and
	(ii) every agent returns to its respective starting location.
	The ratios $\BoC := \smash{\min\limits_{x\in R_{\noC}} \tfrac{\cost(x,w)}{\cost(\OPT,w)}}$
	and $\BoC^* := \smash{\min\limits_{x\in \res} \tfrac{\cost(x,w)}{\cost(\OPT,w)}}$
	are called \emph{benefit of collaboration}.
\end{definition}
\smallskip

\begin{theorem}[Theorems~5, 6 \& 7 in \cite{STACS17}]
	\label{thm:boc}
	For the the benefit of collaboration we have $\BoC \leq 1/\ln 2$ for a single package ($m=1$, for a tight lower bound example see Fig.~\ref{fig:delivery}), 
	and $\BoC \leq \BoC^* \leq2$ in general ($m\geq1$, see for example Fig.~\ref{fig:collabcost}).
\end{theorem}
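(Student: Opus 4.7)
The plan splits the theorem into the trivial inequality $\BoC \leq \BoC^*$ (immediate since $\res \subset R_{\noC}$), the general bound $\BoC^* \leq 2$, and the sharper single-package bound $\BoC \leq 1/\ln 2$, which require two separate arguments.

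For $\BoC^* \leq 2$, I would fix an optimal solution $\OPT$ and, for every package $j$, look at the ordered chain of agents $A_j = \{a_1^j, \dots, a_{r_j}^j\}$ that handle $j$, together with the handover points $s_j = v_0^j, v_1^j, \dots, v_{r_j}^j = t_j$. Select $a^*_j \in \argmin_{a \in A_j} w_a$ and build a candidate solution in $\res$ in which $a^*_j$ alone delivers $j$ straight from $s_j$ to $t_j$ and then returns to $p_{a^*_j}$. The package-carrying part is charged against $\OPT$ via $w_{a^*_j}\cdot d(s_j,t_j) \leq \sum_l w_{a_l^j}\cdot d(v_{l-1}^j,v_l^j)$, using that $w_{a^*_j}$ is minimal in the chain and the segment lengths telescope to at least $d(s_j,t_j)$ by the triangle inequality. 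The round-trip overhead $w_{a^*_j}\cdot (d(p_{a^*_j},s_j)+d(t_j,p_{a^*_j}))$ is then charged against the empty-travel work that $a^*_j$ already performs in $\OPT$ to reach its pickup waypoint and (symmetrically) leave its dropoff waypoint. The factor $2$ exactly absorbs the one-way-versus-round-trip mismatch, and a careful telescoping ensures that no piece of $\OPT$ is charged twice even if an agent is chosen as $a^*_j$ for several packages.

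For $\BoC \leq 1/\ln 2$ in the single-package case, the argument must be quantitatively tight. I would again order the carrying chain and write segment lengths $x_l := d(v_{l-1},v_l)$ with $\sum_l x_l \geq L := d(s,t)$ and carrying costs $c_l := w_{a_l}\cdot x_l$, so $\cost(\OPT,w) \geq \sum_l c_l$. The best single-agent solution has cost at most $\min_l w_{a_l}\cdot(d(p_{a_l},s)+L)$, and the empty-move part can be charged against $\OPT$ using that $a_l$ already had to travel from $p_{a_l}$ toward $v_{l-1}^j$. Parametrising by the split point $l$ and relaxing to a continuous optimisation over possible weight/length profiles, one reduces to a one-variable calculus problem whose extremal profile is geometric in the weights, precisely reproducing Fig.~\ref{fig:delivery}; the minimum ratio evaluates to $1/\int_1^2 (dx/x) = 1/\ln 2$.

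The main obstacle is the constant $1/\ln 2$: any loose averaging or plain worst-agent argument gives only a constant like $2$, and recovering the true value requires an LP/continuous relaxation whose extremum coincides with the harmonic-sum example. By contrast, the factor $2$ in the general case is structurally easier but requires the bookkeeping step to avoid overcounting empty travel when an agent wins the argmin for multiple packages.
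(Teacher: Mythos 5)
First, a framing point: the paper does not actually prove Theorem~\ref{thm:boc} --- it is imported verbatim from \cite{STACS17} (Theorems~5--7 there), so there is no in-paper argument to match and your task was to reconstruct the cited proof. Your skeleton is the right one: $\BoC\le\BoC^*$ is indeed immediate from $R^{\star}_{\noC}\subset R_{\noC}$, the per-package chain decomposition of $\OPT$ with a single-agent replacement is the correct device for the factor $2$, and for $m=1$ the extremal profile is indeed the harmonic/geometric one of Figure~\ref{fig:delivery}, so the target constant $1/\ln 2$ is correctly identified.

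The genuine gap is in the accounting for $\BoC^*\le 2$. You charge the round-trip overhead $w_{a^*_j}\cdot\bigl(\dist(p_{a^*_j},s_j)+\dist(t_j,p_{a^*_j})\bigr)$ against ``the empty-travel work that $a^*_j$ already performs in $\OPT$.'' That charge can be void: take one package over a segment of length $D$, agent $1$ of weight $1$ sitting at $s_1$, agent $2$ of weight $1-\delta$ sitting at the midpoint $v$. The optimum hands the package over at $v$, agent $2$ performs \emph{zero} empty travel in $\OPT$, yet your candidate sends agent $2$ over distance $D/2+D+D/2=2D$. The positioning and return legs must instead be bounded by routing $a^*_j$ backwards and forwards \emph{along the chain itself}, i.e., they are charged a second time against the \emph{carrying} cost of the other agents in the chain (using $w_{a^*_j}\le w_{a_l^j}$); the factor $2$ arises precisely because each piece of $\OPT$ is charged twice, so your claim that ``no piece of $\OPT$ is charged twice'' describes the opposite of the actual mechanism, and ``one-way versus round-trip'' is a misdiagnosis of where the $2$ comes from. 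This matters beyond bookkeeping: for $m>1$ an agent's empty travel is measured from its previous drop-off rather than from $p_i$, the same agent may win the argmin for several packages, and the order in which it serves them interacts with all of these charges --- this is the actual crux of the multi-package case and is only asserted, not handled. Similarly, for $1/\ln 2$ the reduction of the discrete $\min_l$ over chain members to the continuous extremal problem (an averaging/harmonic-mean inequality over the segments) is the entire content of that bound and is left as a claim. As written, the proposal establishes the trivial inclusion and a correct single-package factor-$2$ computation, but not the two quantitative statements of the theorem.
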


\section{Truthful Approximation Mechanisms}

In this section, we present two polynomial-time truthful mechanisms.
Note that energy-efficient delivery is \NP-hard to approximate to within any constant approximation ratio less than 367/366 \cite[Theorem 9]{STACS17}.
Hence, even in the best case, our goal can only be to guarantee a \emph{constant-factor approximation} of the optimum in a truthful way.

In the first part we present a polynomial-time truthful approximation mechanism with a constant-factor approximation guarantee 
\emph{depending on the weights} of the agents. In the second part we turn to \emph{absolute constant}-factor approximation, 
for which we require the number of packages $m$ to be constant in order to get a polynomial-time mechanism. 
Similar techniques yield a mechanism running in exponential time for a constant number of agents $k$. 

\subparagraph{Main algorithmic issue.}
The two truthful polynomial-time approximation mechanisms we obtain are based on the following scheme (and the third approximation mechanism follows the same scheme but takes exponential time):
\begin{enumerate}
	\item	Precompute a feasible subset $R$ of solutions \emph{independently of the input weights} $w'$. This set may depend on the name/index and on the position of the agents. 
	\item	Among all precomputed solutions in $R$, return the best solution with respect to the input weights $w'$, that is, a solution
		$x^*\in  \argmin_{x\in R} \{ \cost(x,w') \}$.
\end{enumerate}
Truthfulness then follows directly by Theorem~\ref{th:VCG-based}. 
Note that, since we want polynomial running time, the first step selects a \emph{polynomial} number of solutions (thus the second step is also polynomial). 
The main crux here is to make sure that, for all possible input weights $w'$, the set $R$ contains at least one \emph{good approximation} 
(a solution $x\in R$ whose cost for $w'$ is at most a constant factor above the optimum for $w'$).

\subsection{A polynomial-time approximation mechanism}
\label{sec:mechanism-full}

We start with the general setting of arbitrarily many packages $m$ and arbitrarily many agents $k$. 
Our construction of a truthful approximation mechanism $(A^*,P)$ for this setting relies on Theorem~\ref{th:VCG-based}. 
To this end, we define a fixed subset of solutions $R_{A^*}$ and a polynomial-time algorithm $A^*$, 
such that for every vector of reported weights $w'$, the algorithm $A^*$ computes a solution $S$ of optimum cost among all solutions in $R_{A^*}$, 
$S \in \argmin_{x\in R_{A^*}} \cost(x,w')$. 

In a next step, we show that whenever the agents report truthfully ($w' = w$), the computed solution $x\in R_{A^*}$ has an energy cost that approximates 
the overall optimum energy cost by a constant factor of at most $4\cdot \tfrac{w_{\max}}{w_{\min}}$, where $w_{\max} := \max_i{w_i}$ and $w_{\min} := \min_i{w_i}$.

\newcommand{\Apos}{A_{pos}}  
\subparagraph{A first approach.} We first analyze an existing approximation algorithm $\Apos$ presented in~\cite{STACS17}, 
which computes a solution depending only on the position of the agents, but not on their reported costs.
Roughly speaking, $\Apos$ connects the agent positions and the package sources and targets by a minimum spanning forest, subject to the following:
In each tree, there is (i) an edge between the corresponding source/target nodes $s_j,t_j$ and (ii) exactly one agent position $p_i$. All edge lengths correspond 
to the distances in the original instance. Agent $i$ then traverses its tree in a DFS-like fashion, crossing each edge twice and thus delivering all packages in the tree. 
For an example of such a solution, see Figure~\ref{fig:approxfull} (left).

\begin{theorem}[Theorem~13 in \cite{STACS17}]\label{th:BoC-apx}
	For any number of packages $m$ and agents $k$, 
	there exists a polynomial-time algorithm $\Apos$ which computes 
	a $\left(\smash{4\cdot \frac{w_{\max}}{w_{\min}}}\right)$-approximation. 
	The computed solution does not depend on the input weights $w$, only on the position of the agents.
\end{theorem}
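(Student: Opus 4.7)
The plan is to construct $A_{pos}$ explicitly and to analyze its cost via the Benefit of Collaboration (Theorem~\ref{thm:boc}). I would define $A_{pos}$ as follows: on the auxiliary complete graph $H$ with vertex set $\{p_1,\ldots,p_k\}\cup\{s_j,t_j\}_{j=1}^m$ and edge weights equal to shortest-path distances in $G$, compute a minimum-weight spanning forest $F$ subject to (i) every edge $(s_j,t_j)$ is included, and (ii) every connected component contains exactly one agent position. Such an $F$ can be computed in polynomial time, for example by inserting a very light dummy edge between every pair of agents, running an ordinary MST algorithm, and then discarding the $k-1$ dummy edges from the result. Each agent $i$ then delivers the packages in its tree $T_i\subseteq F$ by walking an Euler tour of $T_i$ in $G$ along the underlying shortest paths, so $d_i(A_{pos}) \le 2 L_i$ where $L_i$ is the total length of $T_i$. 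Since $F$ depends only on the positions, so does the output.

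Setting $L := \sum_i L_i$, I immediately obtain $\cost(A_{pos}, w) \le \sum_i w_i \cdot 2 L_i \le 2 w_{\max} L$. For the matching lower bound on $\cost(\OPT, w)$, I invoke Theorem~\ref{thm:boc} to obtain $x^* \in \res$ with $\cost(x^*, w) \le 2\cdot \cost(\OPT,w)$. The central step is to construct a feasible constrained spanning forest from $x^*$. Since $x^*\in \res$, each agent $i$ traverses a closed walk whose segments are either direct \emph{with-package} shortest paths (each of length exactly $d_G(s_j,t_j)$) or \emph{empty} transitions between two consecutive terminal visits. I consider the multigraph on $V_i := \{p_i\}\cup\{s_j,t_j : a(j)=i\}$ consisting of (a) the $|P_i|$ mandatory edges $(s_j,t_j)$ of weight $d_G(s_j,t_j)$, and (b) for each empty segment from $u$ to $v$, the metric-closure edge $(u,v)$ of weight $d_G(u,v)\le$ segment length. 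This multigraph is connected (it is derived from the closed walk) and its total weight is at most $d_i(x^*)$; since the mandatory edges are acyclic, a standard contraction argument shows it admits a spanning tree $T_i'$ of $V_i$ containing every mandatory edge, of total weight $L_i' \le d_i(x^*)$.

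Summing over all agents (with singletons $T_i'=\{p_i\}$ for agents delivering no package in $x^*$), the collection $F':=\bigcup_i T_i'$ spans $V(H)$ and satisfies (i) and (ii), so by the optimality of $F$:
\[
L \;\le\; \sum_i L_i' \;\le\; \sum_i d_i(x^*) \;\le\; \cost(x^*,w)/w_{\min} \;\le\; 2\,\cost(\OPT,w)/w_{\min}.
\]
Chaining with $\cost(A_{pos},w)\le 2 w_{\max} L$ then yields $\cost(A_{pos},w) \le 4\cdot \tfrac{w_{\max}}{w_{\min}} \cdot \cost(\OPT,w)$, the claimed bound. I expect the main technical obstacle to be the middle step: one has to show that the empty-segment edges suffice to merge the $|P_i|+1$ components created by the mandatory matching plus $\{p_i\}$ into a single tree, which follows from the connectedness of the multigraph but needs a careful accounting of how many empty transitions occur (exactly $|P_i|+1$ for a closed walk, one more than needed). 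A secondary point is verifying correctness of the dummy-edge construction for computing $F$ and confirming that the Euler tour of $T_i$ realizes a valid delivery schedule---the mandatory edge $(s_j,t_j)\in T_i$ guarantees both endpoints lie in $T_i$, so the DFS visits each $s_j$ to pick up and each $t_j$ to drop off.
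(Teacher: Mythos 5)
Your construction coincides with the algorithm $A_{pos}$ as the paper describes it (a minimum spanning forest over agent positions and package terminals, with every $(s_j,t_j)$ edge forced and exactly one agent per tree, followed by a doubling/DFS traversal), and your analysis---$\cost(A_{pos},w)\le 2w_{\max}L$ for the tours versus $L\le\sum_i d_i(x^*)\le 2\,\cost(\OPT,w)/w_{\min}$ obtained by turning a non-collaborative $2$-approximate solution into a feasible constrained forest---is sound and yields exactly the stated factor $4\cdot\frac{w_{\max}}{w_{\min}}$. The paper itself only sketches the algorithm and defers the proof to Theorem~13 of \cite{STACS17}, so your argument is essentially the intended one.
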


\noindent However, there is no mechanism $(\Apos, P)$ which is both truthful \emph{and} guarantees voluntary participation:
Assume for the sake of contradiction there is such a mechanism. 
To guarantee voluntary participation we need the payments to satisfy $P_i(w') - w_i\cdot d_i(\Apos(w')) \geq 0$ for each agent $i$.
Consider any agent $i$ which is used in the solution, i.e., $\smash{d_i(\Apos(w_i, w_{-i}')) > 0}$.
Note that $\Apos$ selected $i$ independent of its weight $w_i$, hence $d_i$ remains the same for all reported weights $w_i'$, i.e., we have  
$d_i := d_i(\Apos(w_i,w_{-i}')) = d_i(\Apos(w_i',w_{-i}'))$.
Let $P_i(w_i,w_{-i}')$ denote the payment to agent $i$ when she reports her true weight $w_i$.
Now consider a situation where agent $i$ reports a different value $w_i' > P_i(w_i,w_{-i}')/d_i$ instead:
For voluntary participation of agent $i$, the payment $P_i(w')$ needs to satisfy 
$P_i(w') - w_i'\cdot d_i \geq 0 \Leftrightarrow P_i(w') \geq w_i' \cdot d_i$, but $w_i'\cdot d_i >  P_i(w_i,w_{-i}')$, contradicting the truthfulness of the mechanism 
(in other words: reporting an arbitrary high weight results in an arbitrary high payment).

\begin{algorithm}[t!]
\caption{$A^*$}
\begin{algorithmic}[1]
\Input Connected graph $G$, $k$ agents, $m$ packages, black box algorithm $\Apos$ (Thm.~\ref{th:BoC-apx}).
\Output A solution $S$ with cost at most the cost of $\Apos$. 	

\State	Compute the following $k+1$ solutions using $\Apos$ as a black box subroutine:
	\label{algo:step1}
	\[
		x_0:= \Apos(w'),\text{ and }\forall i=1,\ldots,k\colon x_{-i}:=\Apos(\bot,w_{-i}').
	\]
	All solutions in the set $R_{A^*}:=\{x_0, x_{-1},\ldots,x_{-k}\}$ are feasible by connectivity of $G$.
\State	Define algorithm $A^*$ as taking the best among all these solutions with respect to the input weights $w'$:
	\label{algo:step2}
	\[	
		A^*(w'):= \argmin_{x \in \smash{R_{A^*}}}\{\cost(x,w')\}\enspace .
	\]
\end{algorithmic}
\end{algorithm}

\subparagraph{Refining the approach.} In order to obtain a truthful mechanism \emph{with} voluntary participation we consider the algorithm $A^*$ 
obtained from $\Apos$ and a repeated application of algorithm $\Apos$ on all subsets of $(k-1)$ agents 
(for feasibility recall that $G$ is connected and that $k>1$).

\begin{theorem}\label{th:mechanism-general}
	There exists a polynomial-time truthful VCG mechanism $(A^*,P)$ satisfying voluntary participation 
	with approximation ratio of at most the approximation ratio of $\Apos$.
\end{theorem}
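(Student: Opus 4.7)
The plan is to verify four components: (i) that $R_{A^*}$ is a fixed, weight-independent subset of feasible solutions, (ii) that $A^*$ runs in polynomial time, (iii) that for an appropriate Clarke-style pivot the mechanism is truthful and satisfies voluntary participation, and (iv) that $A^*$'s approximation ratio is inherited from $\Apos$.

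The cornerstone observation is that, by Theorem~\ref{th:BoC-apx}, the output of $\Apos$ depends only on which agents are present and on their positions, not on their reported weights. Consequently $R_{A^*}=\{x_0,x_{-1},\dots,x_{-k}\}$ is determined entirely by the fixed agent positions and is independent of $w'$. Since $A^*$ is a $\cost$-minimizer over this fixed set, Theorem~\ref{th:VCG-based} will yield truthfulness for any VCG payment of the form~\eqref{eq:VCG-payment}. Polynomial running time is immediate from the $k+1$ polynomial-time calls to $\Apos$ followed by the final $\argmin$ over $k+1$ candidates. For the approximation guarantee, $x_0=\Apos(w')\in R_{A^*}$ directly gives $\cost(A^*(w'),w')\leq\cost(\Apos(w'),w')$, so $A^*$ inherits any approximation factor guaranteed for $\Apos$ (in particular the $4\cdot w_{\max}/w_{\min}$ factor of Theorem~\ref{th:BoC-apx}).

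The main obstacle will be voluntary participation. I would \emph{not} plug $A^*$ into the generic Clarke pivot of~\eqref{eq:VCG:VP} and invoke Fact~\ref{fac:VP}: applying $A^*$ recursively on the reduced instance $(\bot,w'_{-i})$ would invoke $\Apos$ on $(k-2)$-agent subsets and can produce solutions not contained in the top-level $R_{A^*}$, so the hypothesis of Fact~\ref{fac:VP} is not obviously satisfied (and $\Apos$ is not known to be monotone under agent removal, so in principle a $(k-2)$-agent $\Apos$-solution could be strictly cheaper than every element of $R_{A^*}$). Instead I would choose the tailor-made pivot
\[
Q_i(w'_{-i}) \;:=\; \cost(x_{-i},w'_{-i}),
\]
which is a legitimate $Q_i$ in~\eqref{eq:VCG-payment} since $x_{-i}=\Apos(\bot,w'_{-i})$ is independent of $w_i'$. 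The utility of a truth-telling agent $i$ then collapses to $\cost(x_{-i},w')-\cost(A^*(w'),w')\geq 0$, using $d_i(x_{-i})=0$ (so the cost of $x_{-i}$ under $w'$ and under $w'_{-i}$ coincide) and the minimality of $A^*(w')$ over $R_{A^*}\ni x_{-i}$. Identifying this correct pivot, rather than the recursive one suggested by~\eqref{eq:VCG:VP}, and thereby exploiting the special structure of $R_{A^*}$, is the single nontrivial ingredient beyond quoting earlier results.
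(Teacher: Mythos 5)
Your proposal is correct and follows essentially the same route as the paper: the paper also instantiates the pivot as $Q_i(w'_{-i}) := \cost(\Apos(\bot,w'_{-i}),w'_{-i}) = \cost(x_{-i},w'_{-i})$ (rather than recursively running $A^*$ on the reduced instance), proves voluntary participation from the fact that $x_{-i}\in R_{A^*}$ is considered on input $w'$, and inherits the approximation ratio from $x_0=\Apos(w')\in R_{A^*}$. Your explicit remark on why the naive recursive Clarke pivot is not obviously sufficient is a point the paper passes over more quickly (it says only that it argues ``along the lines of'' Fact~\ref{fac:VP}), but the substance of the argument is identical.
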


\begin{proof}
	We use VCG payments \eqref{eq:VCG-payment} with \eqref{eq:VCG:VP} as $Q_i(w'_{-i}) := \cost(\Apos(\bot,w'_{-i}),w'_{-i}) = \cost(x_{-i},w'_{-i})$.
	Since in Step~\ref{algo:step1} every solution is computed independently of the input weights, 
	algorithm $A^*$ satisfies the conditions of Theorem~\ref{th:VCG-based}, which implies truthfulness. 
	We show voluntary participation along the lines of Fact~\ref{fac:VP}: First note that by Step~\ref{algo:step2} in $A^*$:
	\begin{align}
		\cost(A^*(w'),w')	& \leq 
		\cost(\Apos(\bot,w'_{-i}),w')\enspace, \label{new}
	\end{align}
	since the solution $x_{-i} = \Apos(\bot,w'_{-i})$ is a feasible solution that $A^*$ considers on input $w'$.
	By the same argument, the approximation ratio of $A^*$ is at most the approximation ratio of $\Apos(w)$.
	For voluntary participation, note that for $w_i' = w_i$, $i$'s utility is
	\begin{align*}
		Q_{i}(w'_{-i}) - \left(\vphantom{2^{2^2}}\smash{\sum_{j\neq i}} w'_j \cdot d_j(A^*(w'))\right) - w_i \cdot d_i(A^*(w')) = Q_{i}(w'_{-i}) - \cost(A^*(w'),(w_i,w_{-i}')) & \\
	= \cost(\Apos(\bot,w'_{-i}),w'_{-i}) - \cost(A^*(w'),(w_i',w_{-i}')) \stackrel{\eqref{new}}{\geq} 0.\quad \qedhere 
	\end{align*}
\end{proof}

For an illustration of the computed set $R_{A^*}$, see Figure~\ref{fig:approxfull}: 
The mechanism $(A^*,P)$ will pick solution $x_{-2}$ and award payments $P_1 = 46-(10-4) = 40,\ P_2 = 0, P_3 = 160-(10-6) = 156$.
\begin{figure}[t!]
	\includegraphics[width=\linewidth]{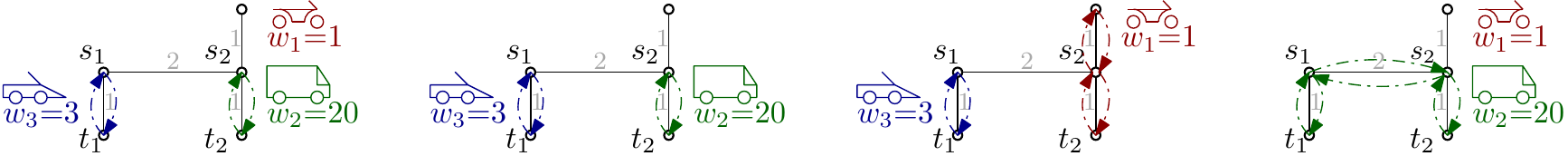}
	\caption{(from left to right)
		Original solution $x_0 = \Apos$ with energy cost $\cost(x_0,w) = 46$,
		solutions $x_{-1}, x_{-2}, x_{-3}$ with energy $\cost(x_{-1},w) = 46$, $\cost(x_{-2},w) = 10$ and $\cost(x_{-3},w) = 160$.\label{fig:approxfull}}		
\end{figure}

\subsection{Absolute constant-factor approximation mechanisms}

\label{sec:mechanisms-constant}

We now turn to developing truthful mechanisms where the approximation guarantee will be \emph{an absolute constant},
therefore independent of the weights (cf.~Theorems~\ref{th:BoC-apx} and \ref{th:mechanism-general}).
To this end, we provide a polynomial-time truthful approximation mechanism if the number of packages $m$ is constant.
We also show that by slightly deviating from the 2-step scheme given in the beginning of this section, 
the running time of the algorithm can be improved to $f(m)\cdot(k n)^{\bigO(1)}$, where $f$ depends only on $m$.
Hence we get a truthful fixed parameter tractable (FPT) approximation mechanism, parametrized by the number of packages.
To the best of our knowledge, the underlying algorithm is also the first absolute constant-factor approximation for the delivery problem.
For the case of only constantly many agents $k$, we provide an exponential-time algorithm, where $k$ is the base of the exponential term.

\subparagraph{No collaboration.} In the following, we consider only solutions $x\in \res$ where agents do not collaborate and follow the two properties given in Theorem~\ref{thm:boc}.
We are therefore left with the tasks of \emph{coordinating} which agent gets assigned to which packages and \emph{planning} in which order she delivers the assigned packages. 
In other words, in every such solution $x \in \res$, each agent $i$ is assigned a (possibly empty) block of packages $M_i(x) = \{ i_1, i_2, \ldots, i_{|M_i(x)|} \}$.
These package blocks are disjoint and form a partition of all $m$ packages $\{1,\ldots, m\}$. 
Furthermore, agent $i$ delivers each of its packages directly after picking it up at its source. These packages are processed in some order $\pi_x(i_1),\pi_x(i_2),\ldots, \pi_x(i_{|M_i(x)|})$, where $\pi_x$ is a permutation of $\{1,\ldots, m\}$. Finally, $i$ returns to its starting location $p_i$.
Therefore, denoting the distance between $u$ and $v$ in $G$ by $\dist(u,v)$, the travel distance $d_i(x)$ of agent $i$ can be written as
\begin{eqnarray}
	d_i(x) =& \dist\left(p_i, s_{\pi_x(i_1)}\right) + \smash{\sum\limits_{j=1}^{|M_i|}} \dist\left(s_{\pi_x(i_j)},t_{\pi_x(i_j)}\right) + \nonumber\\
	& \sum\limits_{j=1}^{\mathclap{|M_i|-1}} \dist\left(t_{\pi_x(i_j)},s_{\pi_x(i_{j+1})}\right) + 
	\dist\left(t_{\pi_x(i_{\smash{|M_i|}})},p_i\right). \label{eq:minperm}
\end{eqnarray}

\begin{myremark}\label{rem:dist}
	In the $\dist(u,v)$ terms in \eqref{eq:minperm} we are not interested in the actual route agent $i$ takes, as long as it uses a shortest path between $u$ and $v$.
\end{myremark}

\subparagraph{Sets and lists.} To clarify the use of package blocks and package orders, we use the standard notion of sets and lists regarding partitions (see e.g.,~\cite{Callan08}):
If we look at a partition of $\left\{ 1,\ldots, m \right\}$ into non-empty disjoint blocks, we can take into account the order of the elements within blocks, 
the order of the blocks, or both. We get four cases: sets of sets, sets of lists, lists of sets and lists of lists (\cite{OESISsetsofsets,OESISsetsoflists,OESISlistsofsets,OESISlistsoflists}).
However, in the delivery setting we can also have agents which are not used at all and therefore not assigned to any packages -- a complete description (\emph{coordination + planning})
of a solution $x\in \res$ is therefore given by a \emph{list of exactly $k$ (possibly empty) lists},
$M=(M_1,\ldots,M_k)$,
where list $M_i$ represents the sequence of the packages that agent $i$ has to deliver in the order specified by $M_i$. Hence, we immediately get a bijection from all lists of exactly $k$ possibly empty lists to $\res$ (modulo the equivalence between shortest paths -- see Remark~\ref{rem:dist}). 
\begin{fact}\label{fac:list-solution}
	For every such list of lists $M$, there is a  solution $x_M \in \res$ 
	in which each agent $i$ delivers the packages in $M_i$ in the order specified by this list and 
	in which the cost is minimized. Given $M$, such a solution can be computed in time $\mathit{poly}(n,m,k).$ 
\end{fact}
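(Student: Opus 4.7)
The plan is to observe that once the list of lists $M=(M_1,\ldots,M_k)$ is fixed, the only remaining freedom in constructing a solution $x_M \in \res$ lies in choosing the actual routes between consecutive relevant nodes (agent starting position, package source, package target). By Remark~\ref{rem:dist}, the cost depends only on the lengths of the chosen routes, so minimizing cost over all $x_M$ consistent with $M$ amounts to choosing a shortest path for each leg.

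First I would identify the \emph{relevant nodes} for each agent $i$, namely $p_i$ together with $s_j,t_j$ for $j \in M_i$. The travel distance $d_i(x_M)$ then decomposes exactly as in Equation~\eqref{eq:minperm}, and each of the $2|M_i|+1$ summands is a distance $\dist(u,v)$ between two such relevant nodes. To compute all these distances, I would run single-source shortest paths (e.g., Dijkstra) from each of the at most $k+2m$ relevant nodes in $G$; the total time is $\mathit{poly}(n,m,k)$ since $G$ is connected and has $n$ nodes with nonnegative edge lengths.

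Next I would build the solution $x_M$ by concatenating, for each agent $i$, the shortest paths realizing the distances in~\eqref{eq:minperm} in the order prescribed by $M_i$ (and letting agents with $M_i = \emptyset$ remain at $p_i$, incurring zero distance). This is feasible because each package $j$ is assigned to exactly one agent by the partition property and, by construction, that agent picks it up at $s_j$ and drops it at $t_j$ directly before processing the next package in its list; thus $x_M \in \res$.

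Finally, I would argue minimality: any other solution $x' \in \res$ respecting $M$ must still traverse the same sequence of node pairs $(p_i,s_{\pi(i_1)}), (s_{\pi(i_1)},t_{\pi(i_1)}), \ldots, (t_{\pi(i_{|M_i|})},p_i)$ for each agent $i$, so $d_i(x') \geq d_i(x_M)$ term by term, whence $\cost(x',w) \geq \cost(x_M,w)$. There is no real obstacle here — the statement is essentially a bookkeeping exercise — the only subtlety is to remember that empty lists $M_i$ are allowed and simply correspond to agents that are not used, which is consistent with the voluntary-participation convention of leaving such agents at $p_i$.
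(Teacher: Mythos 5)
Your proposal is correct and follows exactly the route the paper intends: the fact is stated without an explicit proof precisely because it is immediate from the decomposition of $d_i(x)$ in Equation~\eqref{eq:minperm} together with Remark~\ref{rem:dist}, and your write-up (shortest paths between the at most $k+2m$ relevant nodes, concatenation per agent, term-by-term minimality, empty lists giving $d_i=0$) is a faithful elaboration of that argument.
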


\begin{algorithm}[t!]
\caption{$A^m$ (for a constant number $m$ of packages)} 
\begin{algorithmic}[100]
\Input  Connected graph $G$, $k$ agents, $m$ packages.
\Output An optimal solution $S\in \res$.
\State	\hspace{-3ex}1:\ Brute-force enumeration over all lists of exactly $k$ possibly empty lists of the packages. 
	\Foreach{list of $k$ lists} 
		\State	Add the corresponding solution (Fact~\ref{fac:list-solution}) to the set of solutions~$R_{A^m}$.
	\EndForeach
\State	\hspace{-3ex}2:\ Define algorithm $A^m$ as taking the best among all solutions in $R_{A^m}$ with respect to the input weights $w'$:
	\[	
		A^m(w'):= \argmin_{x \in \smash{R_{A^m}}}\{\cost(x,w')\}\ .
	\]
\end{algorithmic}
\end{algorithm}

\subsubsection*{Constant number of packages $m$}

We now look at the case of a constant number $m$ of packages. 
By Theorem~\ref{thm:boc}, the solution $S:=\argmin_{x\in \res} {\cost(x,w)}$ is a $2-$approximation of the optimum.  
First, we present an algorithm $A^m$ which basically enumerates over all \emph{lists of exactly $k$ (possibly empty) lists}
and adds the corresponding solution to a set $R_{A^m}$ (where we get $R_{A^m} = \res$), as described in the first step of our scheme.
Then, given an input vector of weights $w'$, $A^m$ chooses the best solution $A^m(w') \in \arg\min_{x\in \res} \left\{ \cost(x,w' \right\}$.

\begin{theorem}
	\label{thm:algm}
	Algorithm $A^m$ 
	finds a best solution $S \in$ $\smash{\argmin\limits_{x\in \res} {\cost(x,w)}}$ and can be implemented to run 
	in time $\bigO\left( m!(k+m)^m \cdot \mathit{poly}(n,m,k)\right)$.
\end{theorem}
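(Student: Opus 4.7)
The plan is to deduce correctness from the list-of-lists/solution bijection that the paper has just established, and then bound the running time by counting the enumerated lists and invoking Fact~\ref{fac:list-solution} for the per-iteration cost.

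For correctness, I would first argue that $R_{A^m}=\res$ (modulo the shortest-path freedom of Remark~\ref{rem:dist}). The discussion preceding Fact~\ref{fac:list-solution} records a bijection from lists of exactly $k$ possibly empty lists $M=(M_1,\dots,M_k)$ of $\{1,\dots,m\}$ to solutions in $\res$: $M_i$ specifies which packages agent $i$ delivers and in what order, Fact~\ref{fac:list-solution} packages this into a concrete minimum-cost realizer $x_M\in\res$ whose travel distances are given by~\eqref{eq:minperm}, and allowing $M_i=\emptyset$ accounts for agents that are not used (they stay at $p_i$ with $d_i=0$). Step~1 of $A^m$ enumerates every such list and inserts the corresponding $x_M$ into $R_{A^m}$, so $R_{A^m}=\res$; Step~2 then returns $A^m(w)\in\argmin_{x\in\res}\cost(x,w)$, which is exactly the desired $S$.

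For the running time, I would count the lists of $k$ possibly empty lists of $m$ packages. A convenient route is to first pick one of the $m!$ permutations of $\{1,\dots,m\}$, then split it into $k$ consecutive blocks by placing $k-1$ indistinguishable dividers among the $m+1$ gaps \emph{with repetition allowed} so that blocks may be empty, giving $\binom{m+k-1}{k-1}$ choices. Using the crude bound $\binom{m+k-1}{k-1}=\binom{m+k-1}{m}\leq (k+m)^m$, the enumeration has at most $m!(k+m)^m$ iterations. Fact~\ref{fac:list-solution} builds $x_M$ and evaluates $\cost(x_M,w)$ in time $\mathit{poly}(n,m,k)$ per iteration, and Step~2 is a single linear scan over the stored costs. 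Multiplying yields the claimed $\bigO\bigl(m!(k+m)^m\cdot\mathit{poly}(n,m,k)\bigr)$ bound. The only subtle point is verifying that the enumeration truly covers $\res$, which the bijection above already handles; beyond that, the argument is an elementary combinatorial count, and I do not foresee any real obstacle.
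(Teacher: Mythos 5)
Your proposal is correct and follows essentially the same route as the paper's proof: enumerate the $m!$ permutations, split each into $k$ possibly empty consecutive blocks via $\binom{m+k-1}{k-1}\leq (k+m)^m$ delimiter placements, and invoke Fact~\ref{fac:list-solution} for the per-solution construction and cost evaluation. The added explicitness about the bijection onto $\res$ is consistent with the paper's surrounding discussion and introduces no new ideas or gaps.
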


\begin{proof}
	Step 1 of $A^m$ produces $\bigO(m! \cdot {m+k-1 \choose k-1})$ many solutions and can be implemented in time $\bigO\left( m!(k+m)^m \cdot \mathit{poly}(m,k)\right)$ as follows: 
	Since we look at a list of lists we have a total order on the packages. Hence we first enumerate in an outer loop over all $m!$ permutations, which can be done in time $\bigO(m!)$.
	Each such permutation also needs to be subdivided into exactly $k$ possibly empty lists. There are ${m+k-1 \choose k-1} \leq (k+m)^m$ many ways to do this (by placing $k-1$ delimiters at
	$m+k-1$ potential positions in time $\bigO\left(\mathit{poly}(m,k)\right)$).
	Step 2 of $A^m$ consists of computing the cost of each solution $x\in \res$ 
	in time $\bigO \left( \mathit{poly}(n,m,k)\right)$. 
\end{proof}

\begin{theorem}\label{thm:mechanism-constantm}
	For a constant number of packages $m$, there exists a polynomial-time truthful VCG mechanism $(A^m,P)$, satisfying voluntary participation,
	with approximation ratio $\leq 2$.
\end{theorem}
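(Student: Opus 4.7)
The plan is to verify the three required properties — truthfulness, voluntary participation, and the $2$-approximation guarantee — by combining Theorem~\ref{thm:algm} with the general VCG framework (Theorem~\ref{th:VCG-based} and Fact~\ref{fac:VP}) and the benefit-of-collaboration bound $\BoC^* \leq 2$ (Theorem~\ref{thm:boc}). The payments are the standard Clarke-pivot VCG payments \eqref{eq:VCG-payment} with $Q_i(w'_{-i}) := \cost(A^m(\bot,w'_{-i}),w'_{-i})$ as in \eqref{eq:VCG:VP}, which by Theorem~\ref{thm:algm} can be computed in polynomial time by running $A^m$ a total of $k+1$ times.

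For \emph{truthfulness}, I would observe that the set $R_{A^m}$ enumerated in Step~1 of $A^m$ is constructed purely from $G$, the agent positions, and the packages; it does \emph{not} depend on $w'$. In Step~2, $A^m$ selects $\argmin_{x\in R_{A^m}} \cost(x,w')$. This is exactly the hypothesis of Theorem~\ref{th:VCG-based}, so the VCG-based mechanism $(A^m,P)$ is truthful.

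For \emph{voluntary participation}, by Fact~\ref{fac:VP} it suffices to show that for every agent $i$ and every $w'$,
\[
	\cost(A^m(\bot,w'_{-i}),w') \;\geq\; \cost(A^m(w'),w').
\]
The solution $A^m(\bot,w'_{-i})$ corresponds to a list of $k-1$ possibly empty lists, one per agent in $\{1,\dots,k\}\setminus\{i\}$. Extending it by placing the empty list at position $i$ gives a list of exactly $k$ possibly empty lists, hence an element of $R_{A^m}$ that assigns no packages to agent $i$ and has identical cost under $w'$ (since $d_i=0$). Because $A^m(w')$ minimizes cost over $R_{A^m}$, the inequality follows, and Fact~\ref{fac:VP} yields voluntary participation.

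For the \emph{approximation ratio}, note that $R_{A^m} = \res$ by construction (Fact~\ref{fac:list-solution} gives the bijection), so when agents report truthfully ($w' = w$) we have $\cost(A^m(w),w) = \min_{x \in \res} \cost(x,w)$, which by Theorem~\ref{thm:boc} (the bound $\BoC^* \leq 2$) is at most $2\cdot \cost(\OPT,w)$. Finally, polynomial running time for constant $m$ is immediate from Theorem~\ref{thm:algm}, and computing the $k$ payments only multiplies the running time by $k+1$. The only mildly subtle point — and really the only one requiring any thought — is checking that $A^m(\bot,w'_{-i})$ embeds into $R_{A^m}$ as a valid $k$-agent solution so that Fact~\ref{fac:VP}'s hypothesis holds; everything else is a direct application of the machinery already set up in Section~\ref{sec:preliminaries}.
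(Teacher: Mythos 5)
Your proof is correct and follows essentially the same route as the paper's: truthfulness via Theorem~\ref{th:VCG-based} since $R_{A^m}$ is built independently of $w'$, voluntary participation via Fact~\ref{fac:VP} by embedding $A^m(\bot,w'_{-i})$ into $R_{A^m}$ as the list with $M_i=\emptyset$, and the factor $2$ from $R_{A^m}=\res$ together with $\BoC^*\leq 2$. No gaps; your write-up is in fact slightly more explicit than the paper's on the embedding step.
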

\begin{proof} 
	Theorem~\ref{thm:algm} says that the algorithm satisfies the conditions in  Theorem~\ref{th:VCG-based}, and thus truthfulness holds.  Theorem~\ref{thm:algm} also implies the running time. The approximation is due to Theorem~\ref{thm:boc} and the definition of the benefit of collaboration $\BoC^*$. 
	We next argue that the algorithm satisfies the condition in Fact~\ref{fac:VP}, $\cost(A^m(w'),w') \le \cost(A^m(\bot,w'_{-i}),w')$.
	Indeed, every solution $A^m(\bot,w'_{-i})$ is also considered by the algorithm when all agents are present (input $w'$), since this solution corresponds to some list of $k$ lists $M$ 
	in which agent $i$ is not given any package (i.e.~$M_i=\emptyset$). 
	Thus the solution $A^m(\bot,w'_{-i})$ is contained in $\res$.
\end{proof}

We now show that the running time of the mechanism $(A^m,P)$ can be improved. The main idea is to enumerate in $A^m$ over all \emph{sets of lists} instead of \emph{list of lists} -- 
i.e. during the enumeration we do not fix yet which agent gets which list of packages.
Rather for each set of lists, we aim to directly compute an optimal assignment between the agents and the lists, 
thus deciding for every fixed set of lists in a \emph{parallel way} the assigned list for every agent 
(instead of enumerating over all possible assignments as well). 

\begin{algorithm}[t!]
\caption{$A^m$ (improved version)} 
\begin{algorithmic}[100]
\Input  Connected graph $G$, $k$ agents, $m$ packages.
\Output An optimal solution $S\in \res$.
\State	Enumerate over all \emph{sets} of (non-empty) lists of the packages $1,\ldots, m$. 
	\Foreach{set of $\leq k$ lists}
		\Foreach{pair (agent $i$, list $M_j$)}
			\State 	\hspace{-3ex}1a:\ Assume agent $i$ delivers the packages in $M_j$ in their order.
			\State 	\hspace{-3ex}1b:\ Compute the cost $d_i(M_j)$ of doing so. 			
		\EndForeach
		\State	\hspace{-4ex}2a: Build a complete bipartite graph Agents--Lists 
		\State	with edge costs $w_i\cdot d_i(M_j)$ for each edge $\left\{i,M_j\right\}$.
		\State	\hspace{-4ex}2b: Find the best assignment Agents $\rightarrow$ Lists 
		\State	(by computing a maximum weighted bipartite matching).
		\Foreach{subset of $k-1$ agents}
			\State	\hspace{-3ex}3: Repeat 2a, 2b for the subset of $k-1$ agents.
		\EndForeach
		\State	\hspace{-3ex}4: Keep track of the best solution(s) found so far.
	\EndForeach
\end{algorithmic}
\end{algorithm}

\begin{theorem}\label{thm:mechanism-constantm-improved}
	The running time of the truthful VCG mechanism $(A^m,P)$ can be improved to show fixed parameter tractablility, parametrized by the number of packages, namely a running time of
	$\bigO(f(m) \cdot \mathit{poly}(n,k))$, where $f(m) \in \smash{\bigO\left(e^{2\sqrt{m}-m} m^m\cdot \mathit{poly}(m) \right)}$.
\end{theorem}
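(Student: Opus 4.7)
The plan is to show that the improved algorithm still outputs $\argmin_{x\in \res}\cost(x,w')$, so that truthfulness and voluntary participation carry over verbatim from Theorem~\ref{thm:mechanism-constantm}, while the enumeration is sped up by replacing the brute-force search over agent-to-list assignments with one bipartite matching per fixed \emph{unordered} set of lists. Concretely, every $x \in \res$ corresponds uniquely (via Fact~\ref{fac:list-solution}, after stripping the outer agent index from $(M_1,\ldots,M_k)$ and discarding empty blocks) to a pair $(L,\sigma)$, where $L$ is an unordered set of at most $\min(k,m)$ non-empty ordered lists that partition $\{1,\ldots,m\}$ and $\sigma$ is an injection from $L$ to the $k$ agents.

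The correctness argument is then short. The cost decomposes as $\cost(x,w') = \sum_{M \in L} w'_{\sigma(M)} \cdot d_{\sigma(M)}(M)$, which is an assignment-type sum whose summands depend only on a single (agent, list) pair. Hence for every fixed $L$ the optimal $\sigma$ is precisely a minimum-weight matching in the complete bipartite graph between the $k$ agents and the lists of $L$ (padded with $k-|L|$ cost-$0$ dummy ``empty lists'' to obtain a perfect matching). Step~2b of the improved $A^m$ computes exactly this matching, and Step~3 repeats the matching for each subset of $k-1$ agents, yielding $A^m(\bot,w'_{-i})$ for every $i$ simultaneously. Enumerating over all $L$ and keeping the best reproduces the same minimum over $\res$ as the original $A^m$; Theorem~\ref{th:VCG-based} and Fact~\ref{fac:VP} then give truthfulness and voluntary participation without change, and Theorem~\ref{thm:boc} gives the approximation ratio.

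For the running time, the number of unordered set partitions of $\{1,\ldots,m\}$ into non-empty ordered blocks is the sequence in~\cite{OESISsetsoflists}, whose standard asymptotics are $\bigO\bigl(m^{m}\cdot e^{2\sqrt{m}-m}\cdot \mathit{poly}(m)\bigr)$. For each such $L$ with $j\leq \min(k,m)$ blocks, the algorithm computes $k\cdot j \leq k\cdot m$ distances $d_i(M)$, each a sum of shortest-path lookups in $\mathit{poly}(n,m)$ time; it then solves a minimum-weight bipartite matching on $\bigO(k+m)$ vertices in $\mathit{poly}(k+m)$ time, and repeats the matching $k$ times (once per removed agent). Multiplying these factors gives the claimed $\bigO(f(m)\cdot \mathit{poly}(n,k))$ bound. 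The one conceptual step that drives the speedup, and which I would flag as the substantive part of the proof, is recognizing the decoupling of the agent-to-list assignment across agents once $L$ is fixed; everything else reduces to standard VCG bookkeeping and an asymptotic lookup in~\cite{OESISsetsoflists}.
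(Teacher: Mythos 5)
Your proposal is correct and follows essentially the same route as the paper: enumerate unordered sets of non-empty ordered lists (with the same asymptotic count from \cite{OESISsetsoflists}), solve a minimum-cost bipartite matching per set to assign agents to lists, repeat the matching for each subset of $k-1$ agents to support Clarke's pivot rule, and inherit truthfulness and voluntary participation because the parallelized search still covers all of $\res$. Your explicit $(L,\sigma)$ decomposition and the dummy-list padding are just slightly more formal phrasings of what the paper does with its ``maximum matching of minimum cost.''
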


\begin{proof}
	Consider first the running time of the improved algorithm $A^m$, which iterates over all sets of non-empty lists of the packages $1,\ldots, m$. 
	The number of sets of lists is known to be $\bigO(e^{2\sqrt{m}-m} m^m / \mathit{poly}(m))$~\cite{OESISsetsoflists}. 
	To this end we enumerate over all sets of lists while spending only an additional $\bigO(\mathit{poly}(m))$-factor (over the number of sets of lists) on the running time.
	This can be done by enumerating over all \emph{sets of sets} (by considering packages one-by-one and deciding whether
	to put them in a previously created subset or whether to start a new subset), followed by enumerating over all permutations of the packages inside each subset.

	For each of the sets of at most $k$ lists, we compute the best assignment of the $k$ agents to the lists (say $l\leq k$ many) with a weighted bipartite matching as follows:
	On one side of the bipartite graph, we have $k$ vertices, one for each agent. On the other side, we have one vertex per list. 
	We take the complete bipartite graph between the two sides. This graph has at most $k\cdot l \leq k \cdot m$ edges. 
	We compute the cost of an (agent, list)-edge as the energy cost of delivering all packages in the bundle in that order with that agent.
	This requires $\bigO(\mathit{poly}(n,m))$ time per edge and $\bigO(\mathit{poly}(n,k,m))$ time in total. 
	A \emph{maximum matching of minimum cost} in this graph gives the best assignment of agents to bundles and can be found in polynomial time,
	e.g., by using the Hungarian method~\cite{kuhn1955hungarian} or the successive shortest path algorithm~\cite{edmonds1972theoretical}.

	It remains to show that truthfulness and voluntary participation also hold for the improved version of algorithm $A^m$. 
	Truthfulness follows immediately from the fact that $A^m$ still considers all solutions $x\in \res$, albeit always several of them in a parallel fashion.
	To be able to apply Clarke's pivot rule to define the payments via $Q_{i}(w_{-i}')$ we make sure to also consider all solutions on all $k$ different subsets of $k-1$ agents, see Step 3 of the improved algorithm $A^m$.
\end{proof}


\subsubsection*{Constant number of agents $k$}

Next, we look at settings with a constant number of agents $k$ but an arbitrary number of packages $m$. 
In this case, even for $k=1$ and independent of whether or not we restrict the packages to be transported from source to target without intermediate drop-offs, 
it is \NP-hard to approximate $\min_x\cost(x,w')$ to within any constant approximation ratio less than $367/366$~\cite[Theorem 9]{STACS17}.
There, the bottleneck lies in finding the optimal permutation $\pi_{\OPT}$ to minimize the travel distances, see Equation~\eqref{eq:minperm}.
For $k>1$, we first have to partition the packages into (possibly empty) subsets $M_1, \dots, M_k$.
Contrary to $A^*$ and $A^m$, we will need exponential time $\bigO(k^m)$ to enumerate all partitions.
Next, for every fixed partition of the packages into subsets $M_i$, we look for a package order
$\pi_x(i_1),\pi_x(i_2),\ldots, \pi_x(i_{|M_i|})$ given by a permutation $\pi_x$ of $\left\{ 1,2,\ldots,m \right\}$ such that we get an approximation guarantee on $d_i(x)$.

\subparagraph{Stacker-Crane problem.} The latter can be modeled as the \emph{Stacker-Crane problem}, which asks for the following: 
Given a weighted graph $G_{\mathit{SCP}}$ with a set of \emph{directed arcs} and a set of \emph{undirected edges}, find the minimum tour that uses each arc at least once.
Since we restrict ourselves to the two additional conditions (i) direct delivery of each package, (ii) return of each agent $i$ in the end,
we can model the transport of package $j$ along a shortest path by a directed edge from $s_j$ to $t_j$:
Hence we choose the graph $G_{\mathit{SCP}}$ to consist of node $p_i$ and all nodes $s_j, t_j,\ j = i_1,\ldots,i_{|M_i|}$, 
together with directed arcs $(s_j, t_j)$ of weight $\dist_G(s_j,t_j)$ (corresponding to the length of a shortest path between $s_j$ and $t_j$ in $G$)
and undirected edges $\smash{\left\{ t_{j1}, s_{j2} \right\}, \left\{ p_i, s_{j} \right\}, \left\{ p_i, t_j \right\}}$ of weights 
corresponding to the original distances $\dist_G(t_{j1}, s_{j2})$ between $t_{j1}, s_{j2}$ (respectively $\dist_G(p_i, s_j)$, $\dist_G(p_i, t_j)$). 
For the Stacker-Crane problem, a polynomial-time $1.8$-approximation due to Frederickson~et~al. is known~\cite{FraGKP04}.
It remains to iterate over all assignments of the packages to the $k$ agents as in the presented Algorithm~$A^k$.
\begin{theorem}
	\label{thm:algk}
	Algorithm $A^k$ runs in time $\bigO\left( \mathit{poly}(n,m,k)\cdot k^m \right)$
	and finds an approximate solution $S$ of $\cost(S,w') \leq1.8 \cdot \smash{\min_{x\in \res}}\cost(x,w')$.
\end{theorem}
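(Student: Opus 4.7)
The plan is to verify the two claims of the theorem separately: the running time bound and the $1.8$-approximation guarantee. Both follow rather directly once the Stacker–Crane reduction described just before the theorem is in place, so the main work is to organize the book-keeping cleanly.

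For the running time, I would first argue that there are exactly $k^m$ functions $\{1,\ldots,m\}\to\{1,\ldots,k\}$, i.e., assignments of packages to agents, and enumerating them takes $\bigO(k^m \cdot \mathit{poly}(m,k))$ time. For each fixed assignment $(M_1,\ldots,M_k)$ and each agent $i$, I would build the Stacker–Crane graph $G_{\mathit{SCP}}^{(i)}$ described in the excerpt: its nodes are $p_i$ together with $s_j,t_j$ for $j\in M_i$, its arcs are $(s_j,t_j)$ with weight $\dist_G(s_j,t_j)$, and its undirected edges encode the remaining pairwise graph distances. Since these distances can be precomputed once in $\mathit{poly}(n)$ time (for instance by all-pairs shortest paths), and since $|M_i|\le m$, constructing $G_{\mathit{SCP}}^{(i)}$ for all $i$ takes $\mathit{poly}(n,m,k)$ time. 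Running Frederickson et al.'s $1.8$-approximation on each of the $k$ graphs is again polynomial, so one assignment is processed in $\mathit{poly}(n,m,k)$ time. Multiplying by the $k^m$ assignments yields the stated bound.

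For the approximation guarantee, I would fix an optimal solution $x^*\in\argmin_{x\in\res}\cost(x,w')$ with its induced package blocks $M_1^*,\ldots,M_k^*$ and per-agent travel distances $d_i(x^*)$. Because $x^*\in\res$ satisfies the two properties (direct delivery and return to $p_i$), each $d_i(x^*)$ is the length of a feasible Stacker–Crane tour in $G_{\mathit{SCP}}^{(i)}$ built from $M_i^*$: the arcs $(s_j,t_j)$ for $j\in M_i^*$ must all be traversed, and the remaining legs correspond to undirected edges of $G_{\mathit{SCP}}^{(i)}$. Hence the optimal Stacker–Crane tour on $G_{\mathit{SCP}}^{(i)}$ has length at most $d_i(x^*)$, and Frederickson et al.'s algorithm returns a tour $T_i$ with $|T_i|\le 1.8\cdot d_i(x^*)$. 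Summing weighted costs,
\begin{equation*}
\sum_{i=1}^{k} w_i\cdot|T_i| \le 1.8\cdot\sum_{i=1}^{k} w_i\cdot d_i(x^*) = 1.8\cdot\cost(x^*,w').
\end{equation*}
Since $A^k$ considers the partition $(M_1^*,\ldots,M_k^*)$ during its enumeration and outputs the best solution seen over \emph{all} partitions, the returned solution $S$ satisfies $\cost(S,w')\le 1.8\cdot\cost(x^*,w')=1.8\cdot\min_{x\in\res}\cost(x,w')$.

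The only subtle point, and the place I would be most careful, is the first direction of the Stacker–Crane correspondence: one must check that every tour in $G_{\mathit{SCP}}^{(i)}$ that uses each arc at least once can indeed be realized as a physical route of agent $i$ in $G$ of the same total length, and conversely that agent $i$'s route in $x^*$ induces such a tour. Both directions follow from the fact that the edge and arc weights in $G_{\mathit{SCP}}^{(i)}$ are exactly the shortest-path distances in $G$ (cf.\ Remark~\ref{rem:dist}), so any Stacker–Crane tour can be lifted to a walk in $G$ of equal length by replacing each arc/edge with a corresponding shortest path, and any walk of agent $i$ in $\res$ projects onto a tour in $G_{\mathit{SCP}}^{(i)}$ of no greater length. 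Once this equivalence is stated, the two displayed bounds combine to complete the proof.
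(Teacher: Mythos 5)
Your proposal is correct and follows essentially the same route as the paper's proof: enumerate the $k^m$ assignments of packages to agents, apply the Stacker--Crane $1.8$-approximation per agent, and observe that the partition induced by an optimal $x^*\in\res$ is among those enumerated, so the per-agent bounds sum to the stated guarantee. You are in fact slightly more careful than the paper in spelling out the two-way correspondence between Stacker--Crane tours in $G_{\mathit{SCP}}^{(i)}$ and walks of agent $i$ in $G$, which the paper leaves implicit in its modeling discussion.
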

\begin{algorithm}[t!]
\caption{$A^k$ (for a constant number $k$ of agents)} 
\begin{algorithmic}[100]
\Input  Connected graph $G$, $k$ agents, $m$ packages.
\Output A $1.8$-approximate solution $S\in \res$. 
\State	\hspace{-3ex}1:\ Brute-force enumeration over all lists of exactly $k$ possibly empty sets of the packages. 
	\Foreach{list of $k$ sets $(M_1, \ldots, M_k)$}
		\Foreach{agent $i$}
			\State	\hspace{-3ex}a: Model the delivery of the packages $M_i$ (by agent $i$) as a Stacker-Crane problem.
			\State	\hspace{-3ex}b: Compute a solution $x|_{M_i}$ such that $d_i(x)$ is a 1.8-approximation.
		\EndForeach
		\State	\hspace{-3ex}c: Add delivery problem solution $x$, combined from the $k$ Stacker-Crane solutions $x|_{M_i}$, 
		\State	to the set of solutions $R_{A^k}$.
	\EndForeach
\State	\hspace{-3ex}2:\ Define algorithm $A^k$ as taking the best among all solutions in $R_{A^k}$ with respect to the input weights $w'$:
	\[	
		A^k(w'):= \argmin_{x \in \smash{R_{A^k}}}\{\cost(x,w')\}\enspace .
	\]
\end{algorithmic}
\end{algorithm}
\begin{proof}
	Algorithm $A^k$ first enumerates over all lists of exactly $k$ possibly empty sets of the packages. 
	This can be implemented to run in time $\bigO(k^m)$ by choosing for each of the $m$ packages the subset of packages $M_i$ it belongs to.  
	We know that for each list of exactly $k$ possibly empty lists there is a corresponding solution $x\in \res$ and vice versa (Fact~\ref{fac:list-solution}). 
	In particular there exists a list $M=(M_1,\ldots, M_k)$ of lists $M_1, \ldots, M_k$ for each optimal solution $\OPT(\res) \in \arg \min_{x\in \res}\left\{ \cost(x,w') \right\}$.
	Algorithm $A^k$ at some point considers the list of exactly $k$ possibly empty \emph{sets} $M_1, \ldots, M_k$ (where there is no prescribed order of the elements in each of the $M_i,\ i=1,\ldots, k$).
	Applying the Stacker-Crane approximation algorithm, $A^k$ approximates each of the travelling distances $d_i(\OPT(\res))$ of the agents by a factor of at most $1.8$. 
	Hence $A^k$ also considers a solution $S' \in \res$ of cost
	\begin{align*}
		\cost(S',w') = \sum_{i=1}^k{w_i'\cdot d_i(S')}	&\leq \sum_{i=1}^k{w_i'\cdot 1.8\cdot d_i(\OPT(\res))} 
								= 1.8 \cdot \min\limits_{x\in \res}{\cost(x,w')}.
	\end{align*}
	Since this solution $S'$ is contained in the built set $R_{A^k}$, we know that $S:= A^k(w')$ has $\cost(S,w') \leq \cost(S',w') \leq 1.8 \cdot \min_{x\in \res}{\cost(x,w')}$.
\end{proof}

\begin{theorem}\label{thm:mechanism-constantk}
	For a constant number of agents $k$, there exists an \emph{exponential}-time truthful VCG mechanism $(A^k,P)$, satisfying voluntary participation,
	with approximation ratio $\leq 3.6$.
\end{theorem}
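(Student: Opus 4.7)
The plan is to verify the four desiderata in order: truthfulness, voluntary participation, the approximation ratio, and the running time, each by reducing to a tool already established in the excerpt.

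First, I would argue truthfulness by a direct appeal to Theorem~\ref{th:VCG-based}. The crucial observation is that algorithm $A^k$ follows the two-step scheme: its set $R_{A^k}$ is built in Step~1 by enumerating all lists of $k$ possibly empty package sets $(M_1,\ldots,M_k)$ and, for each agent $i$ and each $M_i$, running the deterministic $1.8$-approximation of Frederickson~et~al.\ for the induced Stacker-Crane subproblem. Since the Stacker-Crane instance for $(i,M_i)$ depends only on the graph $G$, the starting position $p_i$, and the source--target pairs of $M_i$, the construction of $R_{A^k}$ is entirely independent of the reported weight vector $w'$. Step~2 then selects $A^k(w')\in\argmin_{x\in R_{A^k}}\cost(x,w')$, so VCG payments make the mechanism truthful.

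Next, for voluntary participation I would apply Fact~\ref{fac:VP}, i.e.~show $\cost(A^k(w'),w') \leq \cost(A^k(\bot,w'_{-i}),w')$ for every $i$. The key point is that every solution considered by $A^k$ on the instance $(\bot,w'_{-i})$ corresponds to some list of $k-1$ possibly empty sets $(M_1,\ldots,M_{i-1},M_{i+1},\ldots,M_k)$, together with Stacker-Crane tours for each present agent. But this is identical to the $k$-agent list $(M_1,\ldots,M_{i-1},\emptyset,M_{i+1},\ldots,M_k)$ that $A^k$ encounters on the full input $w'$, since agent $i$ contributes the trivial (cost-$0$) empty tour and the Stacker-Crane solutions for the other agents depend only on their own $M_j$. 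Hence $A^k(\bot,w'_{-i})\in R_{A^k}$, and the inequality in Fact~\ref{fac:VP} is immediate from Step~2 of $A^k$.

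For the approximation ratio I would chain two known bounds. Theorem~\ref{thm:algk} gives $\cost(A^k(w),w) \leq 1.8\cdot \min_{x\in\res}\cost(x,w)$, while Theorem~\ref{thm:boc} guarantees $\BoC^* \leq 2$, i.e.~$\min_{x\in\res}\cost(x,w) \leq 2\cdot \cost(\OPT,w)$. Multiplying yields $\cost(A^k(w),w)\leq 3.6\cdot\cost(\OPT,w)$. The claimed exponential running time is inherited directly from Theorem~\ref{thm:algk}: the enumeration over lists of $k$ possibly empty sets takes $\bigO(k^m)$ iterations, each solved in $\mathit{poly}(n,m,k)$ time, and Clarke's pivot rule only requires $k$ extra runs of $A^k$. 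I do not expect a genuine obstacle; the only subtlety worth being careful about is making sure the Stacker-Crane approximation is treated as a deterministic black box, so that the ``$M_i=\emptyset$ slice'' of $R_{A^k}$ really coincides with the solutions considered on the $(k-1)$-agent instance, which is what makes the voluntary participation step go through cleanly.
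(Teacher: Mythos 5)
Your proposal is correct and follows essentially the same route as the paper's proof: truthfulness via Theorem~\ref{th:VCG-based} since $R_{A^k}$ is built independently of $w'$, voluntary participation via Fact~\ref{fac:VP} by identifying the $(k-1)$-agent solutions with the $M_i=\emptyset$ lists in $R_{A^k}$, and the ratio $3.6 = 1.8\cdot 2$ by chaining Theorem~\ref{thm:algk} with Theorem~\ref{thm:boc}. Your explicit remark that the Stacker-Crane subroutine must be a deterministic black box (so the $M_i=\emptyset$ slice genuinely coincides with the restricted instance's solutions) is a small point the paper leaves implicit, but it is the same argument.
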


\begin{proof}
	The approximation follows from Theorem~\ref{thm:boc} and Theorem~\ref{thm:algk}. The latter theorem also implies the running time of the mechanism. Truthfulness and voluntary participation can be proved essentially in the same way as for Theorem~\ref{thm:mechanism-constantm}. Indeed, the set of solutions $R_{A^k}$ is computed independently of the input weights $w'$, and thus the last step defining $A^k$ satisfy the condition of Theorem~\ref{th:VCG-based} (implying truthfulness). As for voluntary participation, we observe that when $A^k$ is run on input $(\bot,w'_{-i})$, the computed solution corresponds to some list of $k-1$ sets (agent $i$ is not present), and the same solution is considered on input $w'$ as a list of $k$ sets, where one set is empty. This implies that the algorithm satisfies
	$\cost(A^k(w'),w') \le \cost(A^k(\bot,w'_{-i}),w')$, i.e., Fact~\ref{fac:VP} and thus voluntary participation.
\end{proof}

\subsection{Comparison of the algorithms}

\begin{figure}[b!]
	\includegraphics[width=\linewidth]{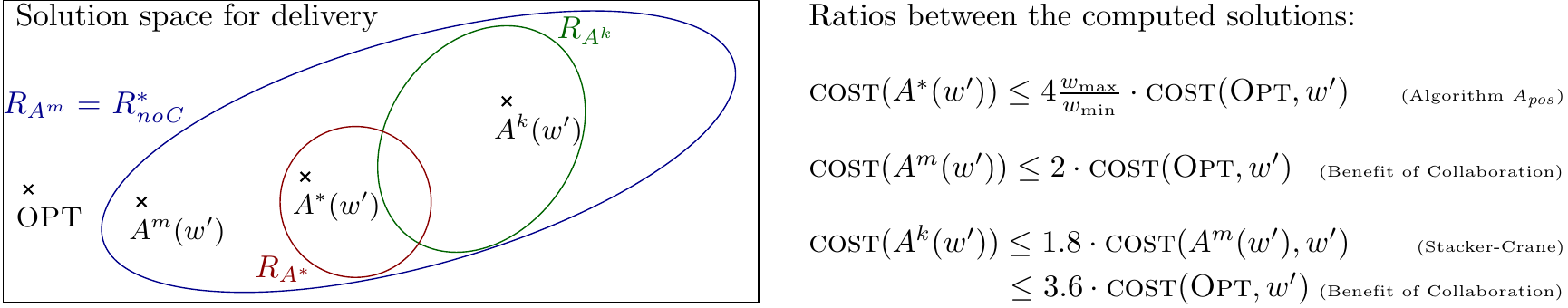}
	\caption{Venn diagram of the subsets of solutions $R_{A^*},\, R_{A^m},\, R_{A^k}$ considered by the given algorithms.
	Depending on the actual instance, the intersection of $R_{A^*}$ and $R_{A^k}$ might be empty.\label{fig:comparison}}
\end{figure}

We conclude our results on truthful approximation mechanisms with a comparison of the three given algorithms 
$A^*$ (general setting, polynomial time), $A^m$ (FPT, parametrized by the number of packages $m$) and $A^k$ (exponential time for a constant number of agents $k$),
given in Figure~\ref{fig:comparison}.
Each of the three algorithms chooses an optimal solution from a respective set $R_{A^*}, R_{A^m}, R_{A^k}$.  
To ensure voluntary participation, in each set we include solutions where individual agents are not present -- 
in this way, we can set the payments according to Clarke's pivot rule~\eqref{eq:VCG:VP}.
All three algorithms make use of Theorem~\ref{thm:boc} which bounds the Benefit of Collaboration $\BoC^*$ by 2 
(this can be seen directly in the definition of the algorithms $A^m, A^k$, 
for algorithm $A^*$ this follows from the black box algorithm $\Apos$~\cite[Theorem 13]{STACS17}).
In fact, the first version of the algorithm $A^m$ computes a set $R_{A^m}$ which consists of all the solutions in $\res$ -- 
the improved version of $A^m$ discards solutions of high cost early on (while still considering solutions where individual agents are not present), 
in order to limit the set $R_{A^m}$ to a small ($\mathit{FPT}$-) size, parametrized by the number of packages. 
Finally, we can compare the approximation guarantees. For algorithm $A^*$ the approximation ratio of $4\cdot \tfrac{w_{\max}}{w_{min}}$ follows from the (same) approximation ratio of $\Apos$. 
Both $A^m$ and $A^k$ have a factor of $\BoC^* \leq 2$, with an additional factor of $1.8$ for $A^k$ 
since we need to approximate the planning of each agent's tour (which we model with stacker-crane).

\section{Single Package and Frugality}
\label{sec:frugality}

For the case of a \emph{single package}, we define two truthful VCG mechanisms. The \emph{optimal} mechanism which minimizes the social cost 
using any number of agents, and the \emph{lonely} mechanism which computes the solution of minimal cost under the constraint of using only one single agent. In both mechanisms, we use the VCG payments \eqref{eq:VCG-payment} with Clarke pivot rule \eqref{eq:VCG:VP} in order to satisfy voluntary participation.

\begin{theorem}[Theorem 2 in~\cite{STACS17}]\label{th:BoC}
	For $m=1$, the optimal solution using a single agent, as well as the optimal solution using any number of agents, can be computed in polynomial time. 
\end{theorem}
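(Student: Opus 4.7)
The theorem restates Theorem 2 of \cite{STACS17}, so the plan is to re-derive its content via two polynomial-time shortest-path computations, one for each case.

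For the single-agent case, for each agent $i$ the best solution using only $i$ has that agent walk along a shortest path from $p_i$ to $s$, pick up the package, and walk along a shortest path to $t$, at cost $w_i(\dist(p_i,s)+\dist(s,t))$. A single Dijkstra call from $s$ (and one from $t$) computes all the needed distances; I then iterate over the $k$ agents and return the minimizer.

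For any number of agents, I would first prove a structural normal form: an optimal solution can be written as a package trajectory $s=v_0,v_1,\ldots,v_\ell=t$ together with an assignment of a distinct agent $i_j$ to each segment $v_{j-1}\to v_j$, where $i_j$ walks from $p_{i_j}$ to $v_{j-1}$ and then on to $v_j$ along shortest paths, contributing $w_{i_j}(\dist(p_{i_j},v_{j-1})+\dist(v_{j-1},v_j))$ to the total cost. The distinctness claim comes from an exchange argument: if some agent were to carry the package on two non-adjacent segments, letting it carry the whole stretch in between and dismissing the intermediate agents only weakly decreases cost, by the triangle inequality applied to that agent's trajectory together with the non-negativity of the dropped intermediate contributions. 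Given this form, I build an auxiliary directed multigraph $H$ on vertex set $V$ in which, for every ordered pair $(u,v)$ and every agent $i$, there is an arc $u\to v$ of weight $w_i(\dist(p_i,u)+\dist(u,v))$; all $\mathcal{O}(kn^2)$ arc weights are computed from the precomputed shortest-path distances in $G$. I then run Dijkstra from $s$ to $t$ in $H$; by the normal form, the minimum-weight $s$-$t$ walk in $H$ has the same cost as the optimum delivery and converts directly into a feasible schedule by assigning each chosen arc to its labeling agent.

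The main obstacle I anticipate is a technical subtlety in the auxiliary-graph reduction: the shortest walk in $H$ might nominally reuse the same agent on non-adjacent arcs, in which case the walk's cost over-accounts the activation term $w_i\dist(p_i,\cdot)$ and does not directly correspond to a feasible delivery. The exchange argument from the structural step has to be re-invoked here, with a careful use of the triangle inequality, to rewrite any such walk with each agent used on at most one arc without increasing its cost, so that it is realizable as a feasible delivery of equal cost and Dijkstra on $H$ indeed returns the optimum.
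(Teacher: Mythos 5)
The paper does not prove this statement---it is imported verbatim as Theorem~2 of \cite{STACS17}---so your reconstruction can only be judged on its own merits. The single-agent case is fine, and your normal form for the multi-agent case (a handover sequence $s=v_0,\ldots,v_\ell=t$ served by \emph{distinct} agents, each walking to its pickup point and carrying one contiguous segment) is correct; it is consistent with the structure the paper itself exploits later, where it writes $OPT=w_1d_1+\cdots+w_\ell d_\ell$ with no agent repeated.

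The gap is in the reduction to a shortest path in your multigraph $H$, and the patch you sketch for it does not work. The difficulty you flag is real: a shortest $s$--$t$ walk in $H$ may use agent $i$ on two non-adjacent arcs, and its cost then charges a second activation term $w_i\dist(p_i,v_{j'-1})$ where any realization must instead pay the empty travel $w_i\dist(v_j,v_{j'-1})$ from $i$'s previous dropoff; the former can be strictly smaller, so the walk's cost need not be achievable by any feasible delivery, and $\min$-walk$(H)$ is not a valid lower bound on the optimum. Your proposed repair---``re-invoke the exchange argument with the triangle inequality''---does not go through: merging the two occurrences of agent $i$ into one arc $(v_{j-1},v_{j'})$ requires $w_i\dist(v_j,v_{j'-1})\le M + w_i\dist(p_i,v_{j'-1})$, where $M$ is the cost of the intermediate arcs, and neither the triangle inequality nor $M\ge 0$ yields this when the intermediate agents are light or sit at their pickup points. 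Equivalently, Bellman subpath-optimality fails for $H$ because the cost of extending a partial delivery by agent $i$ depends on whether $i$ has already been used. The missing ingredient is a second structural lemma: in an optimal solution the carriers appear in \emph{non-increasing order of weight} (if a later carrier were strictly heavier, the previous carrier, already standing at the handover point, could take over that segment at strictly smaller cost). With this, you augment the state to $(v,i)$ = ``package at $v$, last carrier $i$'' and allow transitions only to agents that are lighter (ties broken by index); every path in this layered graph uses distinct agents and is realizable at exactly its accounted cost, so Dijkstra on $\bigO(nk)$ states returns the optimum in polynomial time.
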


\begin{fact}
	Both the exact and the lonely mechanisms are truthful since the algorithms are optimal with respect to a fixed subset of solutions, i.e., they satisfy the condition of Theorem~\ref{th:VCG-based}. Moreover, the mechanisms run in polynomial time by Theorem~\ref{th:BoC}.
\end{fact}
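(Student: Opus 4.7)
The plan is to verify each of the two claims (truthfulness and polynomial running time) by a direct application of earlier results. For truthfulness, I would begin by specifying the fixed subset $R_A$ of solutions over which each algorithm optimizes. For the exact mechanism, $R_A$ is the set of all feasible deliveries of the single package on the given graph $G$ with the source--target pair $(s_1,t_1)$ and the given agent positions $p_1,\ldots,p_k$. For the lonely mechanism, $R_A$ is the subset of those feasible deliveries in which a single agent picks up the package at $s_1$ and transports it to $t_1$ without any handover. In both cases $R_A$ depends only on the graph, the source--target pair, and the agent positions, and in particular it is independent of the reported weight vector $w'$. Since, by Theorem~\ref{th:BoC}, the respective algorithm returns a solution in $\argmin_{x\in R_A}\cost(x,w')$, the hypothesis of Theorem~\ref{th:VCG-based} is satisfied and truthfulness follows immediately.

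For the polynomial running time, I would note that the VCG mechanism invokes the algorithm $A$ once on the original input $w'$ and once on each of the $k$ reduced inputs $(\bot,w'_{-i})$ needed to compute $Q_i(w'_{-i})$ via the Clarke pivot rule \eqref{eq:VCG:VP}. The payments in \eqref{eq:VCG-payment} are then assembled by summing weighted travel distances, which is polynomial in the size of the solution. Theorem~\ref{th:BoC} guarantees that each of these $k+1$ invocations runs in polynomial time, whether $A$ is the exact algorithm (optimizing over all feasible deliveries) or the lonely algorithm (optimizing over single-agent deliveries), so the mechanism as a whole runs in polynomial time.

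The only subtle point to check is that each reduced instance $(\bot,w'_{-i})$ is still feasible, so that Clarke's pivot rule is well-defined. Since $G$ is connected and we have assumed $k>1$, at least one agent remains in $(\bot,w'_{-i})$ and can individually pick up the package at $s_1$ and walk with it to $t_1$; this also shows that the lonely subset remains non-empty after removing any single agent. Hence feasibility is preserved for both mechanisms. This is the only step requiring any argument beyond a direct citation of the prior theorems, and it is immediate; I do not foresee any genuine obstacle in writing out the proof.
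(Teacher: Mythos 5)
Your proposal is correct and follows exactly the reasoning the paper intends: the paper states this Fact without a separate proof, relying on precisely the observations you make (the two algorithms optimize $\cost(\cdot,w')$ over fixed, weight-independent solution sets, so Theorem~\ref{th:VCG-based} gives truthfulness, and Theorem~\ref{th:BoC} gives polynomial time for the $k+1$ invocations needed for the Clarke payments). Your added check that the reduced instances $(\bot,w'_{-i})$ remain feasible because $G$ is connected and $k>1$ matches the paper's standing assumption stated in the preliminaries.
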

In the following, we bound the total payments of these mechanisms compared with the cost of the optimal solution (for the given input). In other words, assuming the reported weights are the true weights ($w'=w$), we would like the mechanism to not pay much more than the optimum for these weights $w=w'$. This property is usually termed \emph{frugality} \cite{Elk04,ArcTar07}.

We first observe that, if we care more about the total payment made to the agents
than about the optimality of the final solution, then in some instances it may pay off to run the lonely mechanism instead of the optimal mechanism. However, in other instances, the converse happens, meaning that neither mechanism is always better than the other.

In order to compare the optimal mechanism with the lonely  mechanism, it is useful to define the following shorthands:
\begin{align*}
	OPT=\cost(A_{opt}(w'),w')\ , && OPT_{-i}=\cost(A_{opt}(\bot, w'_{-i}),w'_{-i})\ , \\ 
	LOPT=\cost(A_{lon}(w'),w')\ , && LOPT_{-i}=\cost(A_{lon}(\bot, w'_{-i}),w'_{-i}) \ ,
\end{align*}
where $A_{opt}$ is the optimal algorithm, and $A_{lon}$ is the lonely  algorithm, that is, the one computing the optimal solution using a single agent. Then the VCG payments \eqref{eq:VCG-payment} with the Clarke pivot rule in \eqref{eq:VCG:VP} can be rewritten as follows in the two cases:
\begin{align}
P_i(w') &= OPT_{-i} - \left(OPT - w_i' \cdot d_i(A_{opt}(w'))\right)\ , \label{eq:pay:VCG-opt} \\
P_i(w') &= LOPT_{-i} - \left(LOPT - w_i' \cdot d_i(A_{lon}(w'))\right)\ . \label{eq:pay:VCG-lonley}
\end{align}
Below, we usually omit the agents' weights $w'=w$ whenever they are 
clear from the context.
\begin{fact}\label{fact:lonley-pay}
	The lonely mechanism pays the selected agent $i$ an amount  $LOPT_{-i}$, and the other agents get no payment. This is because $LOPT = w_i' \cdot d_i(A_{lon}(w'))$ when $i$ is selected, and for all non-selected agent $j\neq i$, $LOPT=LOPT_{-j}$ and $d_j(A_{lon(w')})=0$.
\end{fact}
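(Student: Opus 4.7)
The plan is to prove Fact~\ref{fact:lonley-pay} by a direct substitution into the VCG payment formula \eqref{eq:pay:VCG-lonley}, splitting into the two cases of the selected agent $i$ and any non-selected agent $j\neq i$.

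First I would recall that by definition the lonely algorithm $A_{lon}$ outputs a solution in which exactly one agent, call it $i$, performs the entire delivery. Consequently, in the solution $A_{lon}(w')$ the travel distance satisfies $d_i(A_{lon}(w'))>0$ for the selected agent and $d_{j}(A_{lon}(w'))=0$ for every other agent $j\neq i$. This immediately gives
\[
LOPT \;=\; \sum_{\ell=1}^{k} w'_\ell\cdot d_\ell(A_{lon}(w')) \;=\; w_i'\cdot d_i(A_{lon}(w')).
\]
Plugging this into \eqref{eq:pay:VCG-lonley} makes the parenthesised term $\bigl(LOPT - w_i'\cdot d_i(A_{lon}(w'))\bigr)$ vanish, so $P_i(w') = LOPT_{-i}$, as claimed.

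For a non-selected agent $j\neq i$, the goal is to show $P_j(w')=0$. From $d_j(A_{lon}(w'))=0$ the formula \eqref{eq:pay:VCG-lonley} simplifies to $P_j(w') = LOPT_{-j} - LOPT$, so it suffices to argue $LOPT_{-j}=LOPT$. The key observation here is that removing agent $j$ from the instance does not eliminate the optimal single-agent solution, since that solution only uses agent $i\neq j$, which is still present in the reduced instance $(\bot,w'_{-j})$. Hence any single-agent solution feasible in the full instance except those that used $j$ remains feasible after removing $j$, and in particular the cost-minimising one, namely the solution attaining $LOPT$, survives unchanged. Conversely, any single-agent solution feasible in the reduced instance is also feasible in the full one, so no cheaper option appears. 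Therefore $LOPT_{-j}=LOPT$ and $P_j(w')=0$.

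The argument is essentially just an unrolling of definitions, so there is no substantive obstacle; the only point worth stating carefully is the stability of the lonely optimum under removal of a non-selected agent, which I would phrase as a one-line containment of feasible single-agent solutions in both directions. I would not use a \texttt{proof} environment with heavy calculation; rather I would present the two cases in a couple of inline display equations as above.
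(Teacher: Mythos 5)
Your proposal is correct and follows essentially the same route as the paper, whose entire justification is the inline sentence ``This is because $LOPT = w_i' \cdot d_i(A_{lon}(w'))$ when $i$ is selected, and \dots $LOPT=LOPT_{-j}$ and $d_j(A_{lon}(w'))=0$'' substituted into \eqref{eq:pay:VCG-lonley}. Your only addition is the explicit two-way containment argument for $LOPT_{-j}=LOPT$, which the paper leaves implicit; that is a harmless (and slightly more careful) elaboration, not a different approach.
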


\begin{theorem}
	\label{th:mechanism-design:total-payments}
	For a single package, there are instances where the optimal mechanism pays a total amount of money larger than what the lonely mechanism does. Moreover, there are instances in which the opposite happens, that is, the lonely mechanism pays more. 
\end{theorem}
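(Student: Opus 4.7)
The plan is to prove both directions by exhibiting two concrete single-package instances, each built on the tight $k=2$ case of Figure~\ref{fig:delivery}.

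For the first direction (optimal mechanism pays more), I would simply take the Figure~\ref{fig:delivery} instance with $k=2$: a three-node unit path $0\text{-}1\text{-}2$, one package from $0$ to $2$, agent~$1$ at node~$0$ with weight $\tfrac{1}{3}$, and agent~$2$ at node~$1$ with weight $\tfrac{1}{4}$. Then the optimum is the collaboration ``$1$ carries $0\to 1$, $2$ carries $1\to 2$'' and is strictly cheaper than the two single-agent routes, whose costs are $\tfrac{2}{3}$ and $\tfrac{3}{4}$. Hence both agents are used in $OPT$ and each receives a strictly positive externality through the Clarke-pivot payment~\eqref{eq:pay:VCG-opt}. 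By Fact~\ref{fact:lonley-pay}, the lonely mechanism pays only the best alternative single-agent route to its chosen agent. Substituting the three numbers from Figure~\ref{fig:delivery} into~\eqref{eq:pay:VCG-opt} and~\eqref{eq:pay:VCG-lonley} reduces the comparison to a short piece of arithmetic that shows the optimal total strictly exceeds the lonely total.

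For the opposite direction, I would augment the above instance with a third agent placed at the (distinct) node~$2$ and assigned a very small weight $\epsilon>0$. For $\epsilon$ sufficiently small, the solo route of agent~$3$ is strictly optimal, so both mechanisms select just agent~$3$ with the same travel distance, and their total payments differ only through the Clarke-pivot term $\cost(A(\bot,w'_{-3}),w'_{-3})$. The crux is that the optimal mechanism is allowed to take the original collaborative solution of agents~$1,2$ as its counterfactual, whereas the lonely mechanism is forced to a single-agent replacement. Since the original collaborative cost strictly beats the best single-agent cost (by Figure~\ref{fig:delivery}), the lonely mechanism's pivot term---and hence its total payment---is strictly larger, yielding the second direction.

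The main step to verify carefully is that, in the augmented instance, no overlooked solution beats agent~$3$'s solo route for small $\epsilon$, and no solution involving agent~$3$ beats the original collaboration when evaluating $OPT_{-3}$ for the two original agents. Both checks follow from the same observation: any solution in which one of the original two agents moves a positive distance incurs a cost bounded below by $\min(w_1,w_2)=\tfrac{1}{4}$ times a positive integer, independent of $\epsilon$, so for small enough $\epsilon$ the only cheap solutions involving agent~$3$ are its solo route, and the only cheap solutions excluding agent~$3$ are precisely those already analyzed in Figure~\ref{fig:delivery}. With these small finite enumerations handled, the proof reduces to substituting the already-known costs into~\eqref{eq:pay:VCG-opt} and~\eqref{eq:pay:VCG-lonley}.
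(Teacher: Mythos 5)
Your proposal is correct and follows essentially the same strategy as the paper's proof: exhibit two concrete single-package instances and compare the VCG/Clarke-pivot payment totals of the two mechanisms directly (the paper uses the two instances of Figure~\ref{fig:frugality}, while you reuse the $k=2$ instance of Figure~\ref{fig:delivery}, but the method is identical). I checked your arithmetic: in the first instance the optimal mechanism pays $\tfrac{1}{2}+\tfrac{1}{3}=\tfrac{5}{6}$ versus the lonely mechanism's $\tfrac{3}{4}$, and in the augmented instance the totals are $OPT_{-3}=\tfrac{7}{12}$ versus $LOPT_{-3}=\tfrac{2}{3}$, so both directions go through.
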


\begin{proof}
	First we prove that there are instances, in which the optimal mechanism pays more.
	The example in Figure~\ref{fig:frugality} (left)
	shows an instance and its optimal solution. 
	The optimum for the instance in which agent $1$ is not present has cost $OPT_{-1}=40$ (let agent $3$ do the whole work). More generally, one
	can check that $OPT_{-1} = OPT_{-2} = 40$ and $OPT_{-3}=45$, and obviously $OPT = 9 + 12 + 16 = 37$.
	Using these values for the payments \eqref{eq:pay:VCG-opt}, 
	we get
	$P_1 = 40 - (37 - 9) = 12$, $P_2 = 40 - (37 - 12) = 15,\ P_3 = 45 - (37 - 16) = 24$,
	for a total amount of $51$ payed by this mechanism to the agents. The lonely mechanism will instead select agent $3$ and pay only
	this agent (see Fact~\ref{fact:lonley-pay}) an amount $LOPT_{-3}= 6\cdot (2+2+4) = 48$, which is the lonely optimum for the instance where agent $3$ is not present. 

	\begin{figure}[t!]
		\includegraphics[width=\linewidth]{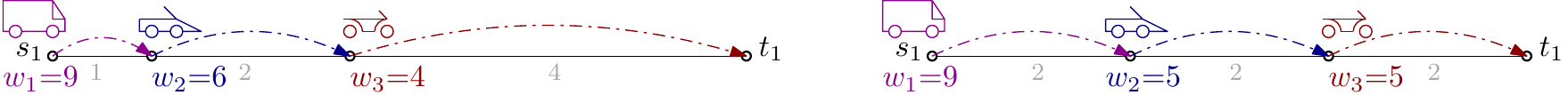}
		\caption{Delivery examples and their optimal solution.
		(left) The optimal mechanism pays more than the lonely mechanism.
		(right) The lonely mechanism pays more than the optimal mechanism.}
		\label{fig:frugality}
	\end{figure}

	On the other hand, the example in Figure~\ref{fig:frugality} (right) shows an instance in which the lonely mechanism pays more.
	The optimal solution has cost is $OPT=18+10+10=38$ and we get $OPT_{-1}=40$, $OPT_{-2}= 46$ and $OPT_{-3}=38$, resulting in the payments
	$P_1 = 40 - (38-18) = 20,\ P_2 = 46 - (38 - 10) = 18,\ P_3 = 38 - (38 - 10) = 10$ for a total amount of $48$.
	The lonely mechanism will instead select agent $2$ and pay $LOPT_{-2}= 50$.
\end{proof}

We remark that the payments given by \eqref{eq:VCG-payment} and \eqref{eq:VCG:VP} guarantee \emph{voluntary participation}, and therefore the mechanism \emph{must} pay a total amount of at least the optimum. We show that both
mechanisms pay only a small constant factor over the optimum, 
except when a single agent can do the work for a much cheaper price than the
others (as shown in Example~\ref{ex:monopoly}): 

\begin{definition}[monopoly-free]
	 We say that an instance with a single package is \emph{monopoly-free} if there is an optimal solution which uses at least two agents.
\end{definition}
\begin{example}\label{ex:monopoly}
	Both the exact and the lonely mechanism perform equally bad if, for instance, there is only one very cheap agent and another very expensive
	one. Consider two agents of weights $w_1=\epsilon$ small, and $w_2=L$ large, both agents sitting on the starting position $s_1$ of the package. 
	In this case the two mechanisms output the same solution and payment: 
	Agent $1$ does the whole work and gets an amount of money given by the best alternative solution. We have $P_1=$ 
	$w_2 \dist(s_1,t_1)=L \dist(s_1,t_1)$, while the optimum is $w_1 \dist(s_1,t_1)=\epsilon \dist(s_1,t_1)$.
\end{example}

\begin{theorem}
	In any single package monopoly-free instance, the optimal mechanism pays a total amount of money which is at most twice the optimum.
\end{theorem}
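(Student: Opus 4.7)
My plan is to reduce the frugality bound to a per-agent inequality via the VCG payment formula, and then establish the latter using the handover-chain structure of single-package optima together with the monopoly-free condition.

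First, by the Clarke pivot rule \eqref{eq:pay:VCG-opt}, agent $i$'s payment is $P_i(w) = OPT_{-i} - (OPT - w_i \cdot d_i(A_{opt}(w)))$. Agents not used by $A_{opt}(w)$ satisfy $d_i(A_{opt}(w)) = 0$ and $OPT_{-i} = OPT$, so they receive nothing. Let $S$ denote the set of agents used by $A_{opt}(w)$; by monopoly-free, $|S| \ge 2$. Since $\sum_{i \in S} w_i \cdot d_i(A_{opt}(w)) = OPT$, the total payment simplifies to $\sum_{i \in S} OPT_{-i} - (|S|-1)\,OPT$, so the bound total payment $\le 2\,OPT$ is equivalent to $\sum_{i \in S} OPT_{-i} \le (|S|+1)\,OPT$. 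I plan to establish the stronger per-agent inequality $OPT_{-i} \le OPT + w_i \cdot d_i(A_{opt}(w))$ for every $i \in S$; summing over $S$ and using $\sum_{i \in S} w_i d_i(A_{opt}(w)) = OPT$ delivers the desired inequality at once.

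Second, I exploit the structure of an optimal single-package solution, which has the form of a handover chain: the agents in $S$, labelled $i_1, \ldots, i_l$ in carrying order, transport the package through handover points $u_0 = s_1, u_1, \ldots, u_l = t_1$, with agent $i_j$ walking from $p_{i_j}$ to $u_{j-1}$ and then carrying the package to $u_j$. To upper-bound $OPT_{-i_j}$, I construct an explicit feasible solution without $i_j$: remove $i_j$ from the chain and extend a neighbouring agent's route to cover segment $[u_{j-1}, u_j]$ (extend $i_{j-1}$'s carry for $j \ge 2$, or redirect $i_2$ to pick up at $s_1$ for $j = 1$). By the triangle inequality the extra travel cost is at most $w_{i_{j-1}}\dist(u_{j-1}, u_j)$ in the first case and at most $2 w_{i_2}\dist(s_1, u_1)$ in the second, while the saving from omitting $i_j$ is exactly $w_{i_j} d_{i_j}$.

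The main obstacle is to bound this extra cost by $2 w_{i_j} d_{i_j}$, which combined with the saving yields the per-agent inequality. Optimality of $A_{opt}(w)$ over the shift alternatives immediately gives the one-sided bound $w_{i_j} d_{i_j} \le w_{i_{j-1}}\dist(u_{j-1}, u_j)$ (and similarly on the right), i.e.~extra $\ge$ saving, as expected. The reverse inequality --- needed to cap the extra by $2 w_{i_j} d_{i_j}$ --- is where monopoly-free enters: without it the extra can be unboundedly larger than the saving, as Example~\ref{ex:monopoly} illustrates with $|S| = 1$. Under monopoly-free, the strict inequality $OPT <$ (lonely cost of any single agent) supplies the required factor-two slack; combining this with the triangle inequality on the shift and with swap-optimality at the handover point $u_{j-1}$ yields extra $\le 2 w_{i_j} d_{i_j}$ and completes the proof.
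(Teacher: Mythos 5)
Your overall route is the same as the paper's: reduce the frugality bound to the per-agent inequality $OPT_{-i} \le OPT + w_i d_i$ for every agent used by the optimum, prove that inequality by deleting the agent from the handover chain and letting a neighbouring carrier absorb its segment, and read off $P_i \le 2 w_i d_i$ from the VCG payment. The reduction and the two surgical constructions (predecessor extends its carry for $j\ge 2$; successor walks back to $s_1$ for $j=1$) are exactly right. The gap is in the one step you yourself flag as the main obstacle: bounding the neighbour's extra cost by $2 w_{i_j} d_{i_j}$. You claim that monopoly-freeness, via ``$OPT <$ lonely cost of any single agent,'' supplies the required factor-two slack. It does not. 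Monopoly-freeness only asserts that \emph{some} optimal solution uses at least two agents; the inequality $OPT \le$ (cost of any one-agent solution) holds trivially in every instance, and there is no path from that global statement to the local bound $w_{i_{j-1}}\dist(u_{j-1},u_j) \le 2\, w_{i_j} d_{i_j}$ at a particular handover point. The only role of monopoly-freeness here is to guarantee the chain has length $\ell \ge 2$, so that every used agent \emph{has} a neighbour to hand its segment to --- which is precisely what fails in Example~\ref{ex:monopoly}, where the chain has length one.

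The missing ingredient is a structural property of optimal single-package chains that follows from swap-optimality alone, independently of monopoly-freeness: consecutive carriers satisfy $w_{i_{j-1}} \ge w_{i_j}$ \emph{and} $w_{i_{j-1}} \le 2 w_{i_j}$. The first holds because otherwise $i_{j-1}$ could take over $i_j$'s segment more cheaply; the second because otherwise $i_j$ could walk back (an extra distance of at most $2\dist(u_{j-2},u_{j-1})$) and take over $i_{j-1}$'s segment more cheaply. With these two facts your constructions close immediately: for $j\ge 2$ the extra cost is $w_{i_{j-1}}\dist(u_{j-1},u_j) \le 2 w_{i_j}\dist(u_{j-1},u_j) \le 2 w_{i_j} d_{i_j}$, and for $j=1$ it is at most $2 w_{i_2}\dist(s_1,u_1) \le 2 w_{i_1} d_{i_1}$. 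You do name ``swap-optimality at the handover point'' among your ingredients, so you are circling the right idea, but as written the factor two is asserted rather than derived, and it is attributed to the wrong hypothesis.
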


\begin{proof}
	The optimal solution selects a certain number $\ell\geq 2$ of agents and
	assigns to each of them some path. By renaming the agents, we can therefore assume that the optimum cost is of the form 
	$OPT = w_1d_1 + w_2 d_2 + \cdots + w_\ell d_\ell$, where no agent
	appears twice and the weights must satisfy
	$w_i \geq  w_{i+1}$ and $w_i \leq  2w_{i+1}$,
	for otherwise agent $i$ can replace agent $i+1$ or vice versa. 
	We shall prove below that
	\begin{equation}
		\label{eq:OPT-i}
		OPT_{-i} \leq OPT + w_i d_i. 
	\end{equation}
	Using VCG payments \eqref{eq:VCG-payment}, we obtain from \eqref{eq:OPT-i} that every agent is payed at most twice her cost, $P_i \leq  OPT + w_i d_i  -(OPT-  w_i d_i) = 2w_i d_i$,
	which then implies the theorem.	
	To complete the proof, we show  \eqref{eq:OPT-i} by distinguishing two cases. 
	For $i < \ell$,
	we can replace agent $i$ with agent $i+1$ who then has to travel an additional distance of at most $2d_i$ to reach $i$ and come back to its position.
	This gives an upper bound:
	$
	OPT_{-i} \leq  OPT - w_i d_i + w_{i+1}2d_i  \leq OPT + w_i d_i$, where the last inequality is due to $w_{i+1} \leq w_i$. 
	For $i=\ell$, we replace agent $i$ by agent $\ell -1$ who travels an extra amount $d_\ell$, and obtain this upper bound: 
	$
	OPT_{-\ell} \leq  OPT - w_\ell d_\ell + w_{\ell-1}d_\ell  \leq OPT + w_\ell d_\ell
	$, where the last inequality is due to $w_{\ell-1}\leq 2w_\ell$.
\end{proof}

\begin{theorem}
	In any single package monopoly-free instance, the lonely mechanism pays at most $2BoC$ times the optimum, where $BoC = 1/ \ln 2\approx 1.44$
	(by Theorem~\ref{thm:boc}).
\end{theorem}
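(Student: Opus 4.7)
The plan is to reduce the claim to the bound $OPT_{-i} \leq OPT + w_i d_i \leq 2\cdot OPT$ already established in the proof of the previous theorem, combined with the Benefit of Collaboration bound from Theorem~\ref{thm:boc}.

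First I would recall from Fact~\ref{fact:lonley-pay} that the total payment of the lonely mechanism equals $LOPT_{-i^*}$, where $i^*$ is the single agent chosen by $A_{lon}$. The goal thus becomes to show $LOPT_{-i^*} \leq 2 \cdot BoC \cdot OPT$. Since the problem has a single package, any no-collaboration solution is by definition a single-agent solution, so applying Theorem~\ref{thm:boc} to the subinstance in which agent $i^*$ is absent gives the clean inequality
\[
LOPT_{-i^*} \leq BoC \cdot OPT_{-i^*}.
\]
It then suffices to show $OPT_{-i^*} \leq 2 \cdot OPT$.

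To establish $OPT_{-i^*} \leq 2\cdot OPT$ I would split on whether $i^*$ appears in an optimal collaborative solution. If $i^*$ is not used in $\OPT$, then the same solution remains feasible after removing $i^*$, so $OPT_{-i^*} = OPT$ and we get the even stronger bound $LOPT_{-i^*} \leq BoC \cdot OPT$. If $i^*$ is used in $\OPT$, this is exactly the situation handled in the proof of the previous theorem: by monopoly-freeness the optimum uses $\ell \geq 2$ agents, so after renaming, the inequalities $w_i \geq w_{i+1}$ and $w_i \leq 2 w_{i+1}$ hold, and we may replace $i^*$ by a neighboring agent in the chain (either $i^*+1$, or $\ell-1$ if $i^*=\ell$) travelling an extra distance of at most $2 d_{i^*}$. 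This gives $OPT_{-i^*} \leq OPT + w_{i^*} d_{i^*} \leq 2 \cdot OPT$, since $w_{i^*} d_{i^*}$ is one of the summands of $OPT$. Plugging back yields $LOPT_{-i^*} \leq 2 \cdot BoC \cdot OPT$ in both cases.

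The only subtle point, and essentially the reason monopoly-freeness enters, is the replacement argument when $i^*$ is used in the optimum: if the optimum used only $i^*$, there would be no neighbor available and the single removal could blow up the cost arbitrarily (as in Example~\ref{ex:monopoly}). Since we have literally reused the argument from the previous theorem, no further calculations are needed; the proof is essentially two applications of Theorem~\ref{thm:boc} bracketing the $OPT_{-i^*} \leq 2\cdot OPT$ step, with the case distinction handling the position of $i^*$ relative to the optimal solution.
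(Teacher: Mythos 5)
Your proof is correct and follows essentially the same route as the paper: Fact~\ref{fact:lonley-pay} to identify the payment as $LOPT_{-i}$, the benefit-of-collaboration bound to get $LOPT_{-i} \leq \BoC \cdot OPT_{-i}$, and inequality~\eqref{eq:OPT-i} to conclude $OPT_{-i} \leq OPT + w_i d_i \leq 2\cdot OPT$. Your explicit case split on whether the selected agent appears in the collaborative optimum is a slightly more careful treatment of a point the paper leaves implicit (if the agent is unused, $d_i = 0$ and \eqref{eq:OPT-i} holds trivially), but the argument is the same.
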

\begin{proof}
	Let $i$ be the selected agent. A crude upper bound on the sum of the payments can be obtained via Fact~\ref{fact:lonley-pay}, 
	where the second inequality requires the instance to be monopoly-free:
	$P_i = LOPT_{-i} \leq \BoC \cdot OPT_{-i} \stackrel{\eqref{eq:OPT-i}}{\leq} \BoC \cdot (OPT + w_i d_i) \leq \BoC \cdot 2 \cdot OPT$.
\end{proof}

\section{Conclusion and open questions}

This work initiates the study of truthful mechanisms for delivery in a natural setting where mobile agents are \emph{selfish} and  can speculate about their own energy consumption rate.
We considered mechanisms which are \emph{truthful}, run in \emph{polynomial-time}, have a worst-case \emph{approximation} guarantee and, possibly, do not pay the agents abnormal amounts (\emph{frugality}).
We provide such polynomial-time truthful approximation mechanisms for two cases: 
(1) when the consumption rate of different agents are similar, i.e., the ratio $\max\frac{w_i}{w_j}$ is bounded (see Theorem~\ref{th:mechanism-general}) and 
(2) when the number of packages $m$ is small (see Theorem~\ref{thm:mechanism-constantm-improved}).
For a single package, we also gave bounds on the frugality of two natural truthful mechanisms which return the optimum or an approximation of the optimum. 

The main open question is whether there exists a polynomial-time constant-factor approximation independent of the weights.
This remains an intriguing open question even for a constant number of agents $k$, for which we presented an exponential-time truthful approximation mechanism.

\subparagraph{Acknowledgments.}
We would like to thank Jérémie Chalopin, Shantanu Das, Yann Disser, Jan Hackfeld, and Peter Widmayer for some insightful discussions.
We are very grateful for the feedback provided by the anonymous reviewers.
This work was partially supported by the SNF (Project 200021L\_156620, Algorithm Design for Microrobots with Energy Constraints).

\bibliographystyle{abbrv}
\bibliography{bib-mechanismagents}

\end{document}